\definecolor{mypink}{cmyk}{0.99, 0, 0, 0.1}
\colorlet{mycolor}{black!20!blue!80!}
\newtheorem{lemma}{Lemma}
\newtheorem{theorem}[lemma]{Theorem}
\newtheorem{remark}[lemma]{Remark}
\newtheorem{proposition}[lemma]{Proposition}
\newtheorem{example}[lemma]{Example}
\newtheorem{definition}[lemma]{Definition}
\renewcommand\ell{l} 
\title{Trade-Based LDPC Codes}
\author{Farzane Amirzade$^1$ \and Daniel Panario$^2$ \and Mohammad-Reza~Sadeghi$^1$ \\
  $^1$ Department of Mathematics and Computer Science, Amirkabir University of 
  Technology \\
  $^2$ School of Mathematics and Statistics, Carleton University \\
  \texttt{famirzade@gmail.com, daniel@math.carleton.ca, msadeghi@aut.ac.ir}
}
\begin{document}
\maketitle

%
%

\begin{abstract}
LDPC codes based on multiple-edge protographs potentially have larger minimum distances compared to their counterparts, single-edge protographs. However, considering different features of their Tanner graph, such as short cycles, girth and other graphical structures, is harder than for Tanner graphs from single-edge protographs.  
In this paper, we provide a novel approach to construct the parity-check matrix of an LDPC code which is based on trades obtained from block
designs. We employ our method to construct two important categories of LDPC codes; quasi-cyclic (QC) LDPC and spatially-coupled LDPC (SC-LDPC) codes. 

We use those trade-based matrices to define base matrices of multiple-edge protographs. The construction of exponent matrices corresponding to these base matrices has less complexity compared to the ones proposed in the literature. We prove that these base matrices result in QC-LDPC codes with smaller lower bounds on the lifting
degree than existing ones. 

 There are three categories of SC-LDPC codes: periodic, time-invariant and time-varying. Constructing the parity-check matrix of the third one is more difficult because of the time dependency in  the parity-check matrix. We use a trade-based matrix to obtain the parity-check matrix of a time-varying SC-LDPC code in which each downwards row displacement  of the trade-based matrix yields syndrome matrices of a particular time. Combining the different row shifts the whole parity-check matrix is obtained. 
  
 Our proposed method to construct parity-check and base matrices from trade designs is applicable to any type of super-simple directed block designs. We apply our technique to directed designs  with smallest defining sets containing at least half of the blocks. To demonstrate the significance of our contribution, we provide a number of numerical and simulation results.
\end{abstract}

\section{Introduction}\label{I}
An LDPC code is a linear code (a vector space over a finite field) whose parity-check matrix is sparse. LDPC codes were discovered by Gallager in 1960 \cite{Gallager} and were mostly ignored during the 35 years that followed. One noticeable exception is the important work of Tanner in 1981 \cite{Tanner} in which the author generalized LDPC codes and introduced a graphical representation of LDPC codes now called Tanner graph.  It is a bipartite graph with two vertex sets: variable-nodes and check-nodes. The incidence matrix of the Tanner graph is the parity-check matrix of the code. A parity-check matrix with  column weight $c$ and row weight $r$ gives a $(c,r)$-regular LDPC code. If columns or rows have different weights, we obtain an irregular LDPC code. An important parameter of the code is the girth, that is, the length of the shortest cycles of its Tanner graph. Experimentally, it is known that the girth influences the code performance \cite{Lin2014}.

There are different approaches to construct LDPC codes. One of these construction methods is protograph-based.  A protograph is a small size bipartite graph whose adjacency matrix 
is taken as a base matrix $W$ of  protograph-based LDPC codes. There is a large body of work devoted to  protograph-based LDPC codes.  Protographs can result in quasi-cyclic (QC) LDPC codes and spatially-coupled (SC) LDPC codes. These codes perform very well under iterative decoding algorithms over the binary-input
AWGN channel \cite{2004,Lentmaier2010}.

QC-LDPC codes are divided into two categories, single-edge and multiple-edge. In fact, if the protograph is free of multiple-edges, then the QC-LDPC code is single-edge. Otherwise, it is multiple-edge. For each protograph-based LDPC code, two matrices are considered: a base matrix $W$ and an exponent matrix $B$. The elements of the base matrix of a single-edge QC-LDPC code are 0, 1. The base matrix of a multiple-edge QC-LDPC code contains also elements bigger than 1. According to \cite{Multiple,Sadeghi,LCOMM},  multiple-edge QC-LDPC codes are known to potentially have large minimum distances. However, obtaining the entries of exponent matrices of  multiple-edge QC-LDPC codes with a certain girth is more complex than those of single-edge QC-LDPC codes. Necessary and sufficient conditions to avoid short cycles such as 4-cycles and 6-cycles from the exponent matrix of a multiple-edge protograph are proposed in \cite{Sadeghi,LCOMM}. Those conditions depend on two difference matrices $D$ and $\Delta$ which are defind in Section \ref{IV}.

Recently,  the construction of protographs for SC-LDPC codes has attracted much attention. The protograph-based SC-LDPC code can be divided into three categories: periodic, time-invariant and time-varying. One of the methods to construct  time-invariant SC-LDPC codes is  similar to the construction of the parity-check matrix of QC-LDPC codes using a base matrix and an exponent matrix. Since there are several works in the literature regarding QC-LDPC codes, such similarity provides a natural way of obtaining a wide range of time-invariant SC-LDPC codes \cite{Sadeghi,Battag1,Battag2}. In fact,  one can use existing methods for the construction of  QC-LDPC codes to obtain time-invariant SC-LDPC codes. However, we cannot take advantage of those methods to find time-varying SC-LDPC codes.  In this case we face steep computational complexity requiremenrs when defining the exponent matrices. Indeed, the property of time dependency that is required in the parity-check matrix causes a raise in the computational complexity. Consequently,  the existing results in the literature about  time-varying SC-LDPC codes are much less in number than for the time-invariant ones. In addition, for the time-invariant codes defining a protograph is not a simple task. One approach to propose protographs of SC-LDPC codes is connecting together several SC-LDPC code chains which causes an improvement in the decoding performance \cite{Liu2015}. More recently,  a research group \cite{Truhachev} proved that properly choosing SC-LDPC code chains and connecting them at specific points give rise to an improvement in iterative decoding thresholds. However, all these methods are search-based and obtaining a method that works in some particular case not necessarily works in other cases.  

In this paper, we provide a novel approach to construct {\em parity-check matrices
of LDPC codes} with girth at least 6. Then, we use those matrices to propose
base matrices of multiple-edge QC-LDPC codes as well as {\em parity-check matrices} of time-varying SC-LDPC codes. To reach our goal we require special classes of combinatorial designs, such as $\emph{balanced incomplete}$ $\emph{block designs}$ (BIBD) as well as their related designs known as $\emph{trades}$. BIBDs have been employed to construct structured LDPC codes \cite{Multiple,9,10,11,12}. {\em The use of trades to construct codes is one of the contributions of this paper. In order to introduce} trades, in Section \ref{II}, we provide {\em some basic definitions from combinatorial design theory} and summarize related works in this area.

 Our proposed technique to construct a parity-check matrix for an LDPC code with girth at least 6 is based on the concept of trade in directed designs. We call the codes as $\emph{trade-based LDPC}$  $\emph{codes}$. Then, using trades and the obtained $\emph{trade-based matrices}$, we provide a new approach to construct a base matrix for a multiple-edge QC-LDPC code. We call the obtained LDPC codes as $\emph{trade-based multiple-edge QC-LDPC code}$. The main contribution of these base matrices, when compared to other constructions in the literature, is that they provide a considerable reduction in the computational complexity of the search algorithm used to identify the elements of the exponent matrix. In fact, we show that because of the non-existence of $2\times2$ submatrices of nonzero entries in the base matrix and because of their sparsity, by defining the entries of a small size submatrix of the exponent matrix we can identify all elements of the exponent matrix. Moreover, the sparseness property of our proposed base matrix causes a decrease in the number of short cycles, which are at the root of many harmful structures in the Tanner graph \cite{AMC}.  We also show that applying the methods in \cite{Sadeghi,LCOMM} to our proposed protographs results in a lower bound on the lifting degree which is  smaller than the ones in \cite{Sadeghi,LCOMM}. 

Another contribution of trade-based matrices, mentioned above, is {\em in the construction of parity-check matrices of time-varying SC-LDPC codes}. In fact, a trade-based matrix and different kinds of rearrangements of its rows give the parity-check matrix related to different times. Thus, without getting involved in protographs, base and exponent matrices, we can directly construct the parity-check matrix of a time-varying SC-LDPC code which is free of short cycles.  

The rest of the paper is organized as follows. Section \ref{II} presents basic notation and definitions. In Section \ref{III}, we propose our technique to construct trade-based LDPC codes. Our techniques are structural and applicable to all super-simple directed designs.  In Sections, \ref{IV} and \ref{V}, respectively, we explain how to use the trade-based matrix to construct base matrices of a multiple-edge QC-LDPC code and the parity-check matrix of a time-varying SC-LDPC code. We also elaborate on the merits of our proposed techniques.  In Section \ref{VI}, we summarize our results.

\section{Combinatorial designs preliminaries}\label{II}
In this section we  define several types of combinatorial designs that are important in our work. The Handbook of Combinatorial Designs \cite{CRC} is an encyclopedic presentation where these objects and their properties are given.

\begin{definition}
Let $v,k,\lambda,r$ be positive integer numbers and $V=\{x_1,\dots,x_v\}$ be a set of $v$ objects. A 2-$(v,k,\lambda)$-BIBD is a collection $\mathcal{B}$ of $k$-subsets of $V$, the $\emph{blocks}$ of the design, with the following properties: (1) each object $x_i$ appears in exactly $r$ blocks, and (2) every two objects appear together in exactly $\lambda$ blocks, where $r=\lambda\frac{v-1}{k-1}$. 
\end{definition}

The incidence matrix of a block design is taken as the parity-check matrix of an LDPC code. If $\lambda=1$, then the Tanner graph is 4-cycle free, otherwise it has girth 4 \cite{ChannelCodes}. 

\begin{definition}
A 2-$(v,k,\lambda)$ directed design $\mathcal{D}$, that we denote by 2-$(v,k,\lambda)$-DD, is a pair $(V,\mathcal{B})$, where $\mathcal{B}$ is a collection of $\emph{ordered}$ $k$-subsets of distinct elements of $V$ such that each $\emph{ordered}$ pair of distinct elements of $V$ appears in exactly $\lambda$ blocks.
\end{definition}
\begin{definition}
 Let $\mathcal{D}$ be a 2-$(v,k,\lambda)$-DD. The $\emph{defining\ set}$ of $\mathcal{D}$ is a subset of the blocks in $\mathcal{D}$ which occurs in no other 2-$(v,k,\lambda)$ directed design. 
 \end{definition}
 
  A 2-$(v,k,\lambda)$ design is $\emph{super-simple}$ if any two blocks intersects in at most two objects and it is simple if it has no repeated blocks.  When $k=3$, a super-simple design is just a simple design.  The concept of super-simple designs was introduced by Mullin and Gronau \cite{mullin}. There are well-known results for the existence of super-simple designs, especially for the existence of super-simple 2-$(v,k,\lambda)$-BIBDs. A super-simple $(v,k,\lambda)$-DD is equivalent to a super-simple 2-$(v,k,2\lambda)$-BIBD. In \cite{Quinn} it is shown that for each admissible value of $v$, there exists a simple 2-$(v,3,1)$-DD whose smallest defining sets have at least half of the blocks.  Moreover, for all $v\equiv1\ ({\rm mod}\ 3)$ there exist a super-simple 2-$(v,4,1)$-DD and a super-simple 2-$(v,4, 2)$-DD whose smallest defining sets have at least half of the blocks \cite{farzane1}, \cite{Boostan}.  Also, the existence of super-simple 2-$(v, 5, 1)$-DDs and their smallest defining sets for all $v\equiv1,5\ ({\rm mod}\ 10)$ is shown in \cite{farzane2}. 
 
 As suggested by A. P. Street, defining sets of directed designs are strongly related to $\emph{trades}$ \cite{street}. The concept of directed trade and defining set of directed designs are investigated in \cite{street}, \cite{Soltan}.  \begin{definition}
 A $(v,k,2)$ ${\emph directed\ trade}$ of volume $s$ consists of two
disjoint collections $T_1$ and $T_2$, each of $s$ blocks, such that every pair of distinct elements of $V$ is covered by precisely the same number of blocks of $T_1$ as of $T_2$. Such a directed trade is usually denoted by $T=T_1-T_2$.
\end{definition}

Given a directed trade $T=T_1-T_2$, blocks in
$T_1(T_2)$ are the $\emph {positive}$ $\emph{(negative)}$ blocks
of $T$. Let ${\mathcal{D}}=(V,\mathcal{B})$ be a directed design.
If $T_1\subseteq \mathcal{B}$, then $\mathcal{D}$ contains
the directed trade $T$ in which $T_2$ is built from $T_1$ by interchanging the order of some elements in blocks of $T_1$.  Hereafter, we use ${\emph trade}$, instead of directed trade.  To simplify considering trades we use a graph in which any vertex is associated to a block and two blocks $B_i, B_j$ are connected by an edge if and only if $B_i, B_j$ form a trade of volume 2. We call this a $\emph {graph of trades}$.

\begin{definition}
Let $K$ be a set of positive integer numbers smaller than or equal to $v$  and $\lambda$ be a positive integer.  A $(K,\lambda)$ ${\emph directed }$ ${\emph group\ divisible\ design}$ (DGDD) of type ${g_1}^{u_1}{g_2}^{u_2}\dots {g_{\ell}}^{u_{\ell}}$ with $ \sum_{i=1}^{\ell}{g_i}{u_i}=v$, is a triple $(V,\mathcal{G},\mathcal{B})$, where $V$ is a
$v$-set, $\mathcal{G}$ is a collection of subsets (groups), each of cardinality in $\{g_1,\dots, g_{\ell}\}$, which partition $V$ into $u_1$ groups of size $g_1,\dots,u_{\ell}$ groups of size $g_{\ell}$, and $\mathcal{B}$ is a collection of  blocks of $V$ such that if $B\in\mathcal{B}$, then $|B|\in K$. As before, each block is ordered and every pair of distinct elements of $V$ appears in precisely $\lambda$ blocks or one group but not in both.   If $\lambda=1$, $(K,1)$-DGDD is denoted by $K$-DGDD. 
\end{definition}

\begin{example}\label{trades}
Let us consider a super-simple 4-DGDD of type $2^4$ with $V=\{0,1,\dots,7\}$, 
groups $\mathcal{G}=\{\{0,1\}, \{2,3\}, \{4,5\}, \{6,7\}\}$ and blocks 
${\mathcal{B}}=\{(3,0,5,6), (7,5,0,2), (5,7,1,3), (6,4,3,1), (4,6,2,0),\\ 
 (1,2,6,5),  (2,1,7,4),
(0,3,4,7)\}.$ 
This directed design has $(v,k,\lambda)=(8,4,1)$  and contains 12 trades of volume 2 which are shown in Fig. \ref{DGDD2to4}. In this figure, each edge between two blocks  shows a trade of volume 2 and the trades related to the edges in boldface are in the following table.   In these trades the elements of $V$ in boldface are involved in the trades.	This design has also some trades of volume 4. For example, assuming $T_1=\{(3, 0,5,6), (7,5,0,2), (4,6,2,0), (1,2,6,5)\}\subseteq \mathcal{B}$ and $T_2=\{(3, 6,5,0), (7,2,0,5), (4,0,2,6), (1,5,6,2)\}$, $T=T_1-T_2$ gives an $(8,4,2)$ trade of volume 4. Moreover, with $T_1=\{(4,6,2,0), (6,4,3,1), (5,7,1,3), (7,5,0,2)\}$, $T_1\subseteq \mathcal{B}$ and $T_2=\{(6,4,0,2), (4,6,1,3), (7,5,3,1), (5,7,2,0)\}$, $T=T_1-T_2$ results in another $(8,4,2)$ trade of volume 4. 
\vspace{-0.5cm}
\begin{center}
\begin{figure}[ht]
\centering
\includegraphics[scale=0.4]{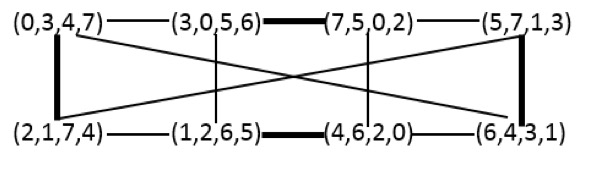}	
\caption{ The graph of trades of the design in Example \ref{trades}.}\label{DGDD2to4}
\vspace{-0.5cm}
\end{figure}
\end{center}
\vspace{-0.5cm}

\begin{table}[ht]
\begin{center}
\begin{tabular}{|l|l|} \hline
$\begin{array}{ccc}
T_1             &      & T_2 \\ \hline
				(3,{\bf 0,5},6)                    &      & (3, {\bf 5, 0}, 6)\\
				(7,{\bf 5,0},2)                     &      & (7, {\bf 0, 5}, 2) 
				\end{array}$& $\begin{array}{ccccccc}
				T_1&      & T_2  \\ \hline
				(5,7,{\bf 1,3})       &      & (5,7,{\bf 3,1})\\
				(6,4,{\bf 3,1})       &      &  (6,4,{\bf 1,3})
				\end{array}$\\ \hline
				$\begin{array}{ccc}
				T_1&      & T_2  \\ \hline
				(4,{\bf 6,2},0)       &      & (4, {\bf 2,6}, 0)\\
				(1,{\bf 2,6},5)       &      & (1, {\bf 6,2}, 5) 
				\end{array}$& $\begin{array}{ccccccc}
				T_1&      & T_2  \\ \hline
                (2,1,{\bf 7,4})       &      & (2,1,{\bf 4,7})\\
				(0,3,{\bf 4,7})       &      &  (0,3,{\bf 7,4})
				\end{array}$\\ \hline
			\end{tabular}
		\end{center}
	\end{table}
\vspace{-0.5cm}
\end{example}

Defining sets for directed designs are strongly related to trades. This relation
is illustrated by the following result.

\begin{proposition}\label{Prop1} 
	\cite{street}  Let $\mathcal{D}=(V,\mathcal{B})$ be a $2$-$(v,k,\lambda)$ directed design and  $S\subseteq
	\mathcal{B}$. Then $S$ is a defining set of $\mathcal{D}$ if and only if
	$S$ contains a block of every $(v,k,2)$ trade $T=T_1-T_2$
	such that $T_1$ is contained in $\mathcal{B}$.
\end{proposition}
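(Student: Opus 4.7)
The plan is to prove the proposition by handling each direction of the biconditional separately, with the underlying idea that trades are precisely the ``local modifications'' one can perform on a directed design while preserving its parameters.

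For the forward direction ($\Rightarrow$), I would argue by contrapositive: assume there exists a $(v,k,2)$ trade $T=T_1-T_2$ with $T_1\subseteq\mathcal{B}$ but $S\cap T_1=\emptyset$, and then construct a second directed design that also contains $S$. The natural candidate is $\mathcal{B}'=(\mathcal{B}\setminus T_1)\cup T_2$. Two things need to be checked. First, $\mathcal{B}'$ is a genuine $2$-$(v,k,\lambda)$-DD: for any ordered pair $(x,y)$ of distinct elements of $V$, the number of blocks of $\mathcal{B}'$ containing $(x,y)$ equals the number in $\mathcal{B}$, because the trade definition forces $T_1$ and $T_2$ to cover each ordered pair the same number of times (so swapping one for the other is pair-preserving). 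Second, $\mathcal{B}'\neq \mathcal{B}$: this follows because $T_1$ and $T_2$ are disjoint and nonempty. Since $S\cap T_1=\emptyset$ we have $S\subseteq \mathcal{B}\setminus T_1\subseteq \mathcal{B}'$, contradicting that $S$ is a defining set of $\mathcal{D}$.

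For the backward direction ($\Leftarrow$), I would again argue by contrapositive: assume $S$ is not a defining set, so there is a directed design $\mathcal{D}'=(V,\mathcal{B}')$ with the same parameters containing $S$, but with $\mathcal{B}'\neq \mathcal{B}$. Set
\[
T_1=\mathcal{B}\setminus \mathcal{B}', \qquad T_2=\mathcal{B}'\setminus \mathcal{B}.
\]
By construction these are disjoint, and both are nonempty (since the two designs have the same number of blocks but differ). I then want to show $T=T_1-T_2$ is a $(v,k,2)$ trade: for every ordered pair $(x,y)$ of distinct elements of $V$, the number of blocks containing $(x,y)$ in $\mathcal{B}$ equals $\lambda$, and so does the number in $\mathcal{B}'$; subtracting the contribution from the common blocks $\mathcal{B}\cap \mathcal{B}'$ yields equality of the coverage counts on $T_1$ and $T_2$. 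Hence $T$ is a trade with $T_1\subseteq \mathcal{B}$. Finally, since $S\subseteq \mathcal{B}'$ and $T_1\subseteq\mathcal{B}\setminus \mathcal{B}'$, we get $S\cap T_1=\emptyset$, so $S$ fails the trade-hitting condition.

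The main subtlety is neither direction by itself, but making explicit the bridge between the two viewpoints: a defining set is defined in terms of \emph{other designs sharing the same $S$}, while the trade condition is purely combinatorial on $\mathcal{B}$. The $(\Leftarrow)$ direction carries the real content, since one must recognize that the symmetric difference of two $2$-$(v,k,\lambda)$-DDs is automatically a $(v,k,2)$ trade; the pair-balance argument is short but is the one nontrivial step. Everything else is straightforward bookkeeping with the definitions of trade and defining set.
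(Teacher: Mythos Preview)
The paper does not supply a proof of this proposition; it is quoted as a known result from \cite{street}. Your argument is the standard one for this classical equivalence and is correct in both directions: the swap $\mathcal{B}'=(\mathcal{B}\setminus T_1)\cup T_2$ for $(\Rightarrow)$ and the symmetric-difference trade $T_1=\mathcal{B}\setminus\mathcal{B}'$, $T_2=\mathcal{B}'\setminus\mathcal{B}$ for $(\Leftarrow)$ are exactly the right constructions, and your pair-balance check is the one substantive step.
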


Let $\mathcal{D}$ be a $(K,\lambda)$-DGDD. Each defining set $S$ of $\mathcal{D}$ contains at
least one block of each trade. In particular, if
$\mathcal{D}$ contains $m$ mutually disjoint trades, then
the smallest defining set of $\mathcal{D}$ must contain at least $m$
blocks. For example, the blocks of DGDD in Example \ref{trades} contain four mutually disjoint trades of volume 2 which are shown in the above table. As can be seen in Fig. \ref{DGDD2to4}, the boldfaced edges related to these trades  are distinct. We note that there could be different selections to find four distinct trades from 12 trades of volume 2 in the blocks of Example \ref{trades}, and in fact any four distinct edges in Fig. \ref{DGDD2to4} result in four distinct trades of volume 2. So, a defining set which contains one block of each of these trades is $S=\{(6,4,3,1),\ (5,7,1,3),\ (1,2,6,5),\ (0,3,4,7),(7,5,0,2)\}$. 
A special type of a trade that we also use to obtain defining sets is a
${\emph cyclical}$ ${\emph trade}$ which is defined next. 
\begin{definition}
A set of $s$ blocks $\{B_1,\dots,B_s\}$ forms a cyclical trade of volume $s$ if  each pair of consecutive blocks, $B_i,B_{i+1}$ for $1\leq i\leq s-1$ as well as $B_1,B_s$, forms a trade of volume 2, and so we have in total $s$ trades of volume 2. We denote a cyclical trade of volume $s$ by $CT_s$.
\end{definition}

 Two sets $CT_4=\{(3, 0,5,6), (7,5,0,2), (4,6,2,0), (1,2,6,5)\}$ and $CT_5=\{(3,0,5,6), (7,5,0,2),\\ (5,7,1,3), (2,1,7,4),(1,2,6,5)\}$ are, respectively, cyclical trades of volumes 4 and 5 in Example \ref{trades}.  Any $CT_s$ is equivalent to an $s$-cycle in the graph of trades.   In general, cycles in Fig. \ref{DGDD2to4} are of lengths 4, 5, 6, 7 and 8. As a result, the directed design in Example \ref{trades} contains $CT_4,CT_5,CT_6,CT_7$ and $CT_8$. 
 
 Therefore, if a directed design $\mathcal{D}$ contains a cyclical trade of volume $s$, then any defining set of $\mathcal{D}$ must contain at least $[\frac{s+1}{2}]$ blocks of that trade. It is important to note that a cyclical trade of volume $s$ does not necessarily give the positive blocks of a trade of volume $s$. For example, the cyclical trade $CT_4$ which is mentioned above gives the positive blocks of a trade of volume 4, because assuming $T_1=\{(3, 0,5,6), (7,5,0,2), (4,6,2,0), (1,2,6,5)\}\subseteq \mathcal{B}$ and $T_2=\{(3, 6,5,0), (7,2,0,5), (4,0,2,6), (1,5,6,2)\}$, $T=T_1-T_2$ gives an $(8,4,2)$ trade of volume 4. Whereas, $CT_5$ does not yield a trade of volume 5.   

 Here, we focus on super-simple directed designs whose smallest defining sets have at least half of the blocks. 
This property allows us to construct parity-check matrices of LDPC codes in which the number of rows and columns  is equal to or bigger than the number of blocks. Thus, the length of our proposed LDPC code is at least the number of blocks.

\section{Construction of trade-based LDPC codes}\label{III}
\raggedbottom In this section, we explain how to use directed block designs $(V,{\mathcal{B}})$ to obtain parity-check matrices of trade-based LDPC codes. Let $V=\{0,1,\dots,v-1\}$ be a $v$-set. Let us construct a $v\choose 2$ $ \times n$ binary matrix $A$ whose rows correspond to pairs $(x_i,x_j)$s, where $x_i<x_j\in\{0,1,\dots,v-1\}$, and its columns are labeled with $B_1,\dots, B_{n}$. If a pair $(x_i,x_j)$ belongs to a block $B_{\ell}$ which appears in a trade, then the element in the $(x_i,x_j)$-th row and the $\ell$-th column is 1, that is,  $A_{(x_i,x_j)\ell}=1$. Otherwise, it is zero. Then, we remove all-zero columns and all-zero rows of $A$ obtaining a binary matrix denoted by $C$. If the number of rows of $C$ is more than the number of columns, then $C^T$, the transpose of $C$, is the parity-check matrix of the trade-based LDPC code. Otherwise, $C$ is the parity-check matrix of the code. Since each block of length $k$ contains $k-1$ pairs $(x_i,x_j)$s, with $x_i<x_j$, which may appear in $\lambda$ trades, the number of 1s in each column of $A$ is at most $\lambda(k-1)$. Also, since any pair $(x_i,x_j)$, with $x_i<x_j$, occurs in $\lambda$ blocks, the number of 1s in each row of $A$ is at most $2\lambda$.

In the following theorem, we propose a close connection between the girth of  trade-based LDPC codes and the smallest volume of a cyclical trade. This relation is our main tool to investigate the cycle distribution of trade-based LDPC codes which are based on the cyclical trades of directed designs used to construct the codes.
\begin{theorem}\label{Prop2}
	A super-simple directed design has  a cyclical trade of volume $s$ if and only if the Tanner graph of the corresponding trade-based LDPC code has $2s$-cycles.
\end{theorem}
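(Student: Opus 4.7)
The plan is to read both sides of the biconditional off the incidence structure of $A$. Since the Tanner graph is bipartite (block-columns versus pair-rows) and its adjacency $B \sim p$ means exactly $p \subseteq B$, a $2s$-cycle is precisely a cyclically ordered sequence of distinct blocks $B_1,\ldots,B_s$ and distinct unordered pairs $p_1,\ldots,p_s$ with $p_i \subseteq B_i \cap B_{i+1}$ (indices mod $s$). I will show this data is equivalent to a cyclical trade $CT_s$, handling the two implications separately.

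For the forward direction, start from $CT_s = \{B_1,\ldots,B_s\}$. By definition each consecutive pair $B_i,B_{i+1}$ forms a trade of volume 2; as the worked trades in Example~\ref{trades} illustrate, this means $B_i$ and $B_{i+1}$ share an unordered pair $p_i$ whose two orderings are swapped between $T_1$ and $T_2$. These $p_i$'s give a closed walk $B_1, p_1, B_2, p_2, \ldots, B_s, p_s, B_1$ of length $2s$ in the Tanner graph. The blocks are distinct by the definition of $CT_s$; to see that the pairs are distinct, I will combine super-simplicity with the fact that a fixed unordered pair lies in at most $2\lambda$ blocks of the design (this is the DD-to-BIBD correspondence mentioned just after the definition of super-simplicity). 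A repetition $p_i = p_j$ with $i \ne j$ would force the multiset $\{B_i, B_{i+1}, B_j, B_{j+1}\}$ to be contained in that small set of blocks, contradicting the block-distinctness built into $CT_s$.

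For the reverse direction, start from a $2s$-cycle and extract the $B_i, p_i$ as above. Super-simplicity forces $B_i \cap B_{i+1} = p_i$, i.e.\ both blocks contain the two elements of $p_i$. Because each ordered pair of distinct elements appears exactly $\lambda$ times in a directed design, $B_i$ and $B_{i+1}$ must contain $p_i$ in \emph{opposite} orders (immediate for $\lambda = 1$, and in general following from the $2\lambda$-bound on unordered pairs together with the pairwise-intersection restriction). Interchanging those orders inside $B_i$ and $B_{i+1}$ produces a genuine $T_1 - T_2$ of volume 2, so the cyclic sequence $(B_1,\ldots,B_s)$ is a cyclical trade.

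The hard part will be the distinctness of the $p_i$'s in the forward direction: super-simplicity controls intersections pairwise but says nothing directly about how often the same unordered pair can recur across the $s$ shared pairs of $CT_s$. The argument has to promote the local condition into the global statement that a pair appears in at most $2\lambda$ blocks of the whole design, and then use the block-distinctness of $CT_s$ to eliminate repeats. Everything else—the translation between trade data and Tanner-graph adjacencies, and the recovery of opposite-order pairs in the reverse direction—is essentially bookkeeping once this is settled.
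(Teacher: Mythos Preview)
Your approach mirrors the paper's: both directions read the correspondence between $CT_s$ and $2s$-cycles directly off the incidence matrix $A$, and the paper likewise phrases the forward direction as an $s\times s$ submatrix with row and column weight $2$. You are more explicit than the paper about two points it glosses over---the distinctness of the shared pairs $p_i$ in the forward direction, and the opposite-order condition in the backward direction---so in spirit the proposal is the same argument with extra care.

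The gap is in your ``hard part.'' For $\lambda=1$ your distinctness argument is fine: a repeated pair $p_i=p_j$ forces four listed blocks into the at-most-two blocks containing that unordered pair, which collapses indices and contradicts $s\ge 3$. But for $\lambda\ge 2$ the bound is $2\lambda\ge 4$, so four distinct blocks can all contain the same unordered pair without any contradiction from block-distinctness in $CT_s$. Super-simplicity then only tells you their pairwise intersections equal that one pair, which is perfectly consistent with a $CT_4$ whose four swap-pairs coincide; the closed walk you build degenerates and is not an $8$-cycle. The same issue hits your backward ``opposite orders'' step: when $\lambda\ge 2$, two blocks can both carry the ordered pair $(x,y)$ and still each have $A_{\{x,y\},B}=1$ via trades with \emph{other} blocks, so sharing a check-node does not by itself make $B_i,B_{i+1}$ a trade of volume~$2$. (The paper's proof has the identical gap---it simply asserts row weight~$2$ and ``therefore $B_i,B_{i+1}$ appear in a trade''---so you are not diverging from it, but your proposed justification does not close it.) A smaller correction: the adjacency $B\sim p$ in $A$ is not merely $p\subseteq B$; only the $k-1$ \emph{adjacent} pairs of an ordered block are recorded, and only when that pair is actually swapped in some trade. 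You will need this adjacency when arguing that interchanging the two elements of $p_i$ yields a legitimate volume-$2$ trade.
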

\begin{proof}
	Let $A$ be a matrix  corresponding to a super-simple directed design. If $B_{\ell},B_{k}\in \mathcal{B}$ are in a trade of volume 2, then $B_{\ell}\cap B_{k}=\{x_i,x_j\}$ and the $(x_i,x_j)$-th row of the matrix $A$ contains a 1 in column $B_{\ell}$ and a 1 in column $B_{k}$. 
	
	 Let $CT_s=\displaystyle \{B_1,B_2\dots, B_s\}$ be a cyclical trade of volume $s$. A cyclical trade of volume $s$ contains $s$ trades of volume 2 and according to the definition of  cyclical trade, we have $|B_i\cap B_{i+1}|=2, 1\leq i\leq s-1$, and $|B_1\cap B_s|=2$. Hence, a submatrix $E$ of $A$ on the columns $B_1,B_2\dots,B_s$  and the rows corresponding to the intersections of consecutive blocks of $CT_s$ is a matrix whose column weights and row weights are 2. Since the parity-check matrix of the trade-based code is obtained from $A$, the parity-check matrix contains $E$ or its transpose. Hence, $E$ or $E^T$ in the parity-check matrix corresponds to a cycle of length $2s$ in the Tanner graph of the code.
	 
	  For example, let $CT_3=\{B_1,B_2,B_3\}$ be a cyclical trade of volume 3, where $B_1\cap B_2=\{x_i,x_j\}$, $B_1\cap B_3=\{x_i,x_k\}$ and $B_2\cap B_3=\{x_j,x_k\}$. Then, the $B_1$-th column of $A$ contains a 1 in the $(x_i,x_j)$-th row and a 1 in the $(x_i,x_k)$-th row. Similarly,  the $B_2$-th column of $A$ contains a 1 in the $(x_i,x_j)$-th row and a 1 in the $(x_j,x_k)$-th row, and also the $B_3$-th column of $A$ contains a 1 in the $(x_i,x_k)$-th row and a 1 in the $(x_j,x_k)$-th row.  Therefore, the Tanner graph contains a 6-cycle corresponding to the path $A_{\left(x_i,x_j\right)B_1}\to A_{\left(x_i,x_j\right)B_2}\to A_{\left(x_j,x_k\right)B_2}\to A_{\left(x_j,x_k\right)B_3}\to A_{\left(x_i,x_k\right)B_3}\to  A_{\left(x_i,x_k\right)B_1}\to  A_{\left(x_i,x_j\right)B_1}$.

	  
	
Now, let the Tanner graph of a trade-based LDPC code obtained from a super-simple 
directed design have a $2s$-cycle. We prove that the corresponding directed design 
has a cyclical trade of volume $s$. Without loss of generality, we suppose that $C$ 
is the parity-check matrix of the trade-based LDPC code. The existence of a 
$2s$-cycle shows that the matrix $C$ contains an $s\times s$ submatrix $E$ such 
that each column and each row of $E$ contains at least two 1s. Without loss of 
generality, we can suppose that the columns of $E$ are $B_1,B_2\dots,B_s$ such 
that $B_i$ and $B_{i+1}$, as two variable-nodes, are in the $2s$-cycle between which 
there is one check-node. Therefore, for each $i\in\{1,\dots,s-1\}$, $B_i$ and 
$B_{i+1}$ appear in a trade of volume 2. Since $B_1$ and $B_s$ are connected to a 
common check-node,  $B_1$ and $B_s$ also appear in a trade of volume 2. According 
to the definition of the cyclical trade, we conclude that $B_1,B_2\dots,B_s$ produce 
a cyclical trade of volume $s$.\hfill
\end{proof}

From  Theorem \ref{Prop2} we conclude that the existence of cyclical trades of volume 3 results in 6-cycles in the Tanner graph of a trade-based LDPC code; otherwise the code has girth at least 8. Example \ref{Ex2} presents a trade-based  LDPC code from the DGDD of type $2^4$ in Example \ref{trades}. We use Theorem \ref{Prop2} to obtain the girth of its Tanner graph. 
\begin{example}\label{Ex2}
	We take all trades consisting of the blocks presented in Example \ref{trades}.  The matrix $C$ for this directed design has 12 rows and 8 columns. Thus, $C^T$, which is presented in the following matrix, is the parity-check matrix of a $(2,3)$-regular LDPC code: 
	\begingroup\fontsize{5.5pt}{5.5pt}
	\begin{align}\label{DGDD28}
C^T=\begin{blockarray}{ccccccccccccc}
	${\bf 02}$&${\bf 03}$  & ${\bf 12}$&${\bf 13}$& ${\bf 05}$ & ${\bf 17}$ & ${\bf 26}$ & ${\bf 34}$ & ${\bf 46}$ & ${\bf 47}$ & ${\bf 56}$ & ${\bf 57}$ &\\
	\begin{block}{[cccccccccccc]c}
	${\bf 1}$& .  & .& .&  ${\bf 1}$ & . & . & . & . &  .& . & ${\bf 1}$ & ${\bf 7502}$\\
		.& . &  . & ${\bf 1}$ &. & ${\bf 1}$ & . & . & . & .& . & ${\bf 1}$& ${\bf 5713}$\\
.& ${\bf 1}$  & .& .& ${\bf 1}$ & . & . & . &.&. &${\bf 1}$ & .& ${\bf 3056}$\\
.& .  & ${\bf 1}$ & .& . & . & ${\bf 1}$ & .& . & . &${\bf 1}$ & .  & ${\bf 1265}$\\
.& ${\bf 1}$  & . & .& .& . & .& ${\bf 1}$ & . & ${\bf 1}$ &. & .  & ${\bf 0347}$\\
	.& .  &  ${\bf 1}$ & .& .& ${\bf 1}$ & . & . & . & ${\bf 1}$ &. & . & ${\bf 2174}$\\
.& .  & . & ${\bf 1}$ & .& . & .&  ${\bf 1}$ & ${\bf 1}$ & . &. & .  & ${\bf 6431}$\\
	${\bf 1}$& . & . & . & .& . & ${\bf 1}$ & .& ${\bf 1}$ & . &. & . & ${\bf 4620}$\\
	\end{block}
	\end{blockarray}
	\end{align}
	\endgroup
	
	In this matrix, each column index $ij$ is equivalent to a pair $(i,j);\ i<j\in \{0,1,\dots,7\}$. Moreover, each row index $klmn$ is equivalent to a block $(k,l,m,n)\in \mathcal{B}$ and the sign $(.)$ denotes a zero entry. {\em As shown in Fig. \ref{DGDD2to4}, there is no triangle in the graph corresponding to the trades of this block design and therefore it is free of cyclical trades of volume 3. Thus, according to Theorem \ref{Prop2}, } the Tanner graph of the code has no 6-cycle. The existence of cyclical trades of volume 4 proves that the obtained LDPC code has girth 8.   
	
\end{example}

\begin{example}\label{trade2}
We consider a super-simple 4-DGDD of type $2^4$ with $V=\{0,1,\dots,7\}$, 
groups $\mathcal{G}=\{\{0,1\}, \{2,3\}, \{4,5\}, \{6,7\}\}$ and blocks 
${\mathcal{B}}=\{(0,3,6,5), (7,5,0,2), (5,7,3,1), (2,4,1,7), (4,6,2,0),\\
(1,2,5,6), (3, 0,7,4),  (6,1,4,3) \}.$ The pairs that appear in trades are 
$(0,2), (0,3), (1,4), (5, 6), (5,7)$. Therefore, the matrix $C$ for this design 
has 5 rows and 8 columns which is the parity-check matrix of an irregular LDPC 
code with column weights in $\{1,2\}$ and row weights 2. The matrix $C$ of this 
design, which is also used in Section \ref{IV}, is given next:

\begingroup\fontsize{5.5pt}{5.5pt}
	\begin{align}\label{C28}
	\begin{blockarray}{ccccccccc}
${\bf 0365}$	 & ${\bf 7502}$ & ${\bf 5731}$ & ${\bf 2417}$ & ${\bf 4620}$ & ${\bf 1256}$  & ${\bf 3074}$  &  ${\bf 6143}$ & \\
	\begin{block}{[cccccccc]c}
	.& ${\bf 1}$  &  . & . & ${\bf 1}$ & . & . &  . & ${\bf 02}$\\
	  ${\bf 1}$ & . &.  & . & . & . & ${\bf 1}$& . & ${\bf 03}$\\
      .& . & . & ${\bf 1}$ &.& . &. & ${\bf 1}$& ${\bf 14}$\\
      ${\bf 1}$& . & . & . & .& ${\bf 1}$ & .  & .  & ${\bf 56}$\\
      .  & ${\bf 1}$ & ${\bf 1}$ & . & . &. & . & .  & ${\bf 57}$\\
	\end{block}
	\end{blockarray}
	\end{align}
	\endgroup	
\end{example}

\begin{example}\label{Ex-col4}
The followings are the blocks of a $(20,5,1)$ DD in \cite{farzane2}:
{\small{$$\begin{array}{llll}
B_1=(1, 5, 9, 13, 17) & B_2=(2, 6, 10, 14, 17)& B_3=(3, 7, 11, 15, 17)& B_4=(4, 8, 12, 16, 17)\\
B_5=(17, 13, 12, 7, 2) & B_6=(17, 14, 11, 8, 1) & B_7=(17, 15, 10, 5, 4) & B_8=(17, 16, 9, 6, 3)\\
B_9=(2, 7, 9, 16, 18) & B_{10}= (3, 6, 12, 13, 18) & B_{11}=(4, 5, 11, 14, 18) & B_{12}=(1, 7, 12, 14, 19)\\
B_{13}=(18, 16, 12, 5, 1) & B_{14}=(18, 13, 9, 8, 4) & B_{15}=(18, 14, 10, 7, 3) & B_{16}=(19, 14, 9, 5, 2)\\
B_{17}=(3, 5, 10, 16, 19) & B_{18}=(4, 6, 9, 15, 19) & B_{19}=(1, 6, 11, 16, 0) & B_{20}=(2, 5, 12, 15, 0)\\
B_{21}=(19, 16, 11, 7, 4) & B_{22}=(19, 15, 12, 8, 3) & B_{23}=(0, 16, 10, 8, 2) & B_{24}=(0, 15, 9, 7, 1)\\
B_{25}=(4, 7, 10, 13, 0) & B_{26}=(3, 8, 9, 14, 0) &B_{27}= (2, 8, 11, 13, 19) & B_{28}=(1, 8, 10, 15, 18)\\
B_{29}=(0, 13, 11, 5, 3) & B_{30}= (0, 14, 12, 6, 4) & B_{31}= (19, 13, 10, 6, 1) & B_{32}= (18, 15, 11, 6, 2)
\end{array}$$}}
 As shown in the graph corresponding to trades of this design in Fig. \ref{V25K5}, each block appears in 4 trades of volume 2. The graph is free of triangles which proves the non-existence of cyclical trades of volume 3. The smallest cycle in the graph is 4 which corresponds to a $CT_4$. Using this design we can construct a matrix $C$ which has 32 rows and 64 columns. Therefore, $C^T$ is the parity-check matrix of a $(2,4)$-regular LDPC code with girth 8. We use the matrix $C$ of this code in Section \ref{V}.
 \vspace{-0.2cm}
 	\begin{center}
\begin{figure}
\centering
\includegraphics[scale=0.4]{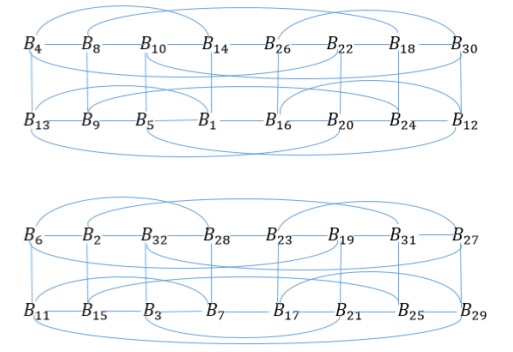}	
\caption{ {\em The graph of trades of the design in Example \ref{Ex-col4}.}}\label{V25K5}
\vspace{-0.2cm}
\end{figure}
\end{center}
\vspace{-0.8cm}
\end{example}

Our technique to construct a trade-based LDPC code is applicable to all kinds of super-simple directed designs. To clarify this technique, we apply it to super-simple DGDDs with block size 4 in the following examples. In these examples, we shall adopt the standard approach of using finite abelian groups to generate the set of blocks of any given directed design. For example, if $(V,\mathcal{G},\mathcal{B})$ is a DGDD, then instead of listing all the blocks, we give the element set $V$, plus a set of base blocks, and generate the other blocks using an additive group $G$. For $|G|=x$, the notation $(+t\ {\rm mod}\ x)$ means the base blocks should be developed by adding $0,\ t,\ 2t,\dots, x-t\ ({\rm mod}\ x)$ to them. Let us call this notation as a developer.   
\begin{example}\label{EX1}
	A super-simple 4-DGDD of type $2^7$  can be constructed on the set ${\mathbb{Z}}_{14}$ with groups $\{i,i+7\},\ i=0,1,\dots,6$, by developing
	the base blocks $(0,1,4,6)$ and  $(1,0,11,9)$ using $(+1\ {\rm mod}\ 14)$. We construct a binary matrix $A$ with $14\choose 2$ rows with indices $(x_i,x_j)$s, where $x_i,x_j\in\{0,1,\dots,13\}$, and $|{\mathcal{B}}|=28$ columns with indices $B_1,\dots, B_{28}$. The pairs $\displaystyle(0,1),\ (1,4),\ (4,6)\ (+1\ {\rm mod}\ 14)$,  appear in the trades. Let us take $B_1=(0, 1, 4, 6)$ as the first column index.  Then, $A_{(0,1)1}=1,\ A_{(1,4)1}=1$, $A_{(4,6)1}=1$ and all other entries of the fist column are zero. We continue this process to construct the binary matrix $A$ and remove all-zero rows and all-zero columns obtaining $C$. Since the number of rows and columns of $C$  are 42 and 28, respectively, $C^T$ is the parity-check matrix of the trade-based LDPC code. Since this design has no cyclical trade of volume 3, Theorem \ref{Prop2} proves the Tanner graph has girth at least 8. Moreover, the column weight and the row weight of the code is 2, 3, respectively.
\end{example}
\begin{example}\label{EX3}
	A super-simple $(4,2)$-DGDD of type $3^6$  can be constructed on the set ${\mathbb{Z}}_{18}$ with groups $\{i,i+6,i+12\};\ i=0,1,\dots,5$ and by developing the base blocks $(2,0,5,9),\ (7,10,0,2),\ (1,5,0,10),\\ (0,1,2,16), (11,4,1,0)$ by $(+1\ {\rm mod}\ 18)$; see \cite{Boostan}. We construct a binary matrix $A$ with $18\choose 2$ rows and 90 columns. Since the pairs appearing in the trades are $(0,2), (0,5), (5,9), (7,10), (0,10), (0,1)$ developed by  $(+1\ {\rm mod}\ 18)$, the number of rows of $C$ is 108. Since the number of rows of $C$ is more than the number of columns, $C^T$ is the parity-check matrix of this trade-based LDPC code.  This design contains cyclical trades of volume 3 such as $CT_3=\{(2,0,5,9),(7,10,0,2),(1,5,0,10)\}$. Hence, Theorem \ref{Prop2} proves the Tanner graph has girth 6. Since the column weights are 2 or 3 and the row weights are 2 or 3, the resulting code is irregular.
\end{example} 
\begin{example}\label{EX4}
We construct a super-simple $(4,3)$-DGDD of type $2^{10}$ on ${\mathbb{Z}}_{20}$ 
with groups $\{i,i+10\};\ i=0,1,\dots,9$ by developing by $(+1\ {\rm mod}\ 20)$ 
the base blocks:
\begin{align}
 & (7,0,2,13), (8,4,0,13), (2,0,11,14), (12,3,0,8), (2,0,11,14), \nonumber\\
 & (0,1,4,2), (1,5,6,0), (1,0,8,7), (5,8,0,2), (2,0,9,6). \nonumber
\end{align}
We construct a binary matrix $A$ with $20\choose 2$ rows and 180 columns. The 
pairs $(0,5), (0,7), (+1\ {\rm mod}\ 20)$ are not in trades, hence the number 
of rows of $C$ is 140. Thus, $C^T$ is the parity-check matrix of an irregular 
trade-based LDPC code whose row weights and column weights belong to $\{2,3,4\}$.
\end{example} 
\begin{example}\label{EX5D}
We provide a trade-based LDPC code with  a large girth Tanner graph. A super-simple 
5-DGDD of type $4^{16}$  can be constructed on the set ${\mathbb{Z}}_{64}$ with 
groups $\{i,i+16,i+32,i+48\};\ i=0,1,\dots,15$ by developing by $(+1\ {\rm mod}\ 64)$ the following 
base blocks; see \cite{farzane2}, $$(0,1,40,3,47), (1,0,25,62,18), (0,4,26,14,35), (4,0,42,54,33), 
(0,5,41,11,56), (5,0,28,58,13).$$
Every block appears in 4 trades; the smallest cyclical trade $CT_8$ has volume 8: 
\begin{align}
  & (0,5,41 11,56), (5,0,28,58,13), (2,7,43,13,58), (7,2,30,60,15),  \nonumber \\
  & (4,9,45,15,60), (9,4,32,62,17), (6,11,47,17,62), (52,47,11,41,60).  \nonumber
\end{align} 
We construct a binary matrix $A$ with $64\choose 2$ rows and 384 columns. Since 
the number of pairs that appear in trades is 768, $C^T$ is the parity-check 
matrix of a $(2,4)$-regular trade-based LDPC code  with girth 16.
\end{example} 
\section{Trade-based multiple-edge QC-LDPC codes}\label{IV}
A base matrix $W$ is associated to each protograph-based QC-LDPC code whose elements identify the type of the protograph. A base matrix with entries 0 and 1 only results in a single-edge protograph. A base matrix containing $l\geq2$ corresponds to a multiple-edge QC-LDPC code. First, we explain how to construct the parity-check matrix of QC-LDPC codes.
\begin{definition}
	Given a lifting degree $N$, an exponent matrix $B$ corresponding to an $m\times n$ base matrix $W$ is defined as follows.  If $W_{ij}=\ell$ is a positive integer, then the $ij$-th entry of $B$ is an $\ell$-set, that is,  $B_{ij}=\{b^{1}_{ij},b^{2}_{ij},\ldots, b^{l}_{ij}\}$, where $b^{r}_{ij}\in \lbrace 0,1,\ldots,N-1\rbrace$ and $\;b^{r}_{ij}\neq b^{r^\prime}_{ij}$ for $\;1\leq r<r^\prime\leq l$. If $W_{ij}=0$, then the $ij$-th entry of $B$ is an empty set, that is, $B_{ij}=\emptyset$.

If the $ij$-th entry of $B$ is an $\ell$-set, then it is replaced by an $N\times N$ matrix 
	$$ H_{ij}=I^{b^{1}_{ij}}+I^{b^{2}_{ij}}+\cdots +I^{b^{l}_{ij}},$$ where $I^{b^{r}_{ij}}$ is an $N\times N$ circulant permutation matrix (CPM) whose top row has a single 1 at the $b^{r}_{ij}$-th position. The other entries of its first row are zero. The $s$ right cyclic shifts of the first row of the CPM yield its $s$-th row. If $B_{ij}=\emptyset$, then it is substituted by an $N\times N$ zero matrix. The resulting matrix is the parity-check matrix whose null space gives a QC-LDPC code \cite{AMC}.
\end{definition}

A necessary and sufficient condition given in \cite{2004} for the existence of $2k$-cycles  in  the Tanner graph of  multiple-edge QC-LDPC codes is, for $n_{k}=n_{0}$, 
\begingroup\fontsize{11pt}{11pt}\begin{align}\label{Equation2}
\sum_{i=0}^{k-1} \left( b^{r^{}_i}_{m_{i}n_{i}} - b^{r^{\prime}_i}_{m_{i}n_{i+1}} \right)\equiv 0\  ({\rm mod}\ N) , 
\end{align}\endgroup
where if $n_{i} = n_{i+1}$, then $r^{}_{i}\neq r^{\prime}_{i}$, and also if $m_{i} =m_{i+1}$, then $ r^{\prime}_{i}\neq r^{}_{i+1}$. Here $b^{r^{}_i}_{m_{i}n_{i}}$ is the $r^{}_i$-th entry of $B_{m_{i}n_{i}}$ and $B_{m_{i}n_{i}}$ is the $(m_i n_i)$-th entry of $B$.

In order to simplify checking for $2k$-cycles, we need the difference matrices $D$ and $\Delta$ given in \cite{LCOMM}. In addition to these, we need a necessary and sufficient condition to have a girth-6 multiple-edge QC-LDPC code also given in \cite{LCOMM}.
\begin{definition}\label{DefD}
\cite{LCOMM} A difference matrix $D$	corresponding to $B$ is a matrix in which the $ij$-th element is obtained by subtracting every two entries of $B_{ij}$ modulo $N$. If $|B_{ij}|>1$, then there exist $2{|B_{ij}|\choose 2}$ choices to obtain $D_{ij}$ and for the chosen $x,y\in B_{ij}$, there are the subtractions $(x-y)$ and $(y-x)$ modulo $N$. Thus, if $|B_{ij}|>1$, then $|D_{ij}|=2{|B_{ij}|\choose 2}$. If $|B_{ij}|=1$, then $D_{ij}=\emptyset$, and if $B_{ij}=\emptyset$, then $D_{ij}=\emptyset$. 
 	\end{definition}
 	
 	\begin{definition}\label{DefDelta}
\cite{LCOMM} A difference matrix $\Delta$ of size ${m\choose2}\times n$ corresponding to an $m\times n$ matrix $B$ is a matrix in which each two rows of $B$ with indices $i$ and $i'$, $i<i'$ give the $(i,i')$-th row of $\Delta$ in the following way. If  $B_{ij}$ and  $B_{i'j}$ are two entries of the $j$-th column of $B$, then $\Delta_{(i,i')j}$ of size $|B_{ij}|\times|B_{i'j}|$ is obtained by subtracting each entry of $B_{i'j}$ from each entry of $B_{ij}$. If $B_{ij}$ or $B_{i'j}$  is $\emptyset$, then $\Delta_{(i,i')j}=\emptyset$.
 \end{definition}
 
 \begin{theorem}\label{Theg6}
\cite{LCOMM} A multiple-edge QC-LDPC code is free of 4-cycles if and only if each row of $D$, each column of $D$ and each row of $\Delta$ are free of repeated elements.
\end{theorem}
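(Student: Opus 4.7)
The plan is to specialize the general $2k$-cycle condition (equation~(2)) to $k=2$ and show that, after a complete case split on whether $m_0=m_1$ and/or $n_0=n_1$, each possibility matches exactly one of the three repetition conditions in the statement. For $k=2$ with $n_2=n_0$ the condition reads
\begin{equation*}
\bigl(b^{r_0}_{m_0 n_0}-b^{r'_0}_{m_0 n_1}\bigr)+\bigl(b^{r_1}_{m_1 n_1}-b^{r'_1}_{m_1 n_0}\bigr)\equiv 0\pmod{N},
\end{equation*}
subject to the constraints that $r_i\ne r'_i$ whenever $n_i=n_{i+1}$ and $r'_i\ne r_{i+1}$ whenever $m_i=m_{i+1}$. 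Since $D$ collects all within-cell differences and $\Delta_{(i,i')j}$ collects all cross-row differences between $B_{ij}$ and $B_{i'j}$ in column $j$, the three repetition conditions correspond to three disjoint algebraic rewritings of this congruence.

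First I would dispose of the ``cross'' case $m_0\ne m_1$ and $n_0\ne n_1$. Rearranging the $k=2$ condition yields
\begin{equation*}
b^{r_0}_{m_0 n_0}-b^{r'_1}_{m_1 n_0}\equiv b^{r'_0}_{m_0 n_1}-b^{r_1}_{m_1 n_1}\pmod{N},
\end{equation*}
which (taking $i=\min(m_0,m_1)$, $i'=\max(m_0,m_1)$ and reordering signs as needed) says precisely that an entry of $\Delta_{(i,i')n_0}$ equals an entry of $\Delta_{(i,i')n_1}$, i.e.\ row $(i,i')$ of $\Delta$ has a repeated element. Next, for $m_0=m_1$ with $n_0\ne n_1$, the constraint $r'_0\ne r_1$ forces the congruence into the form
\begin{equation*}
b^{r_0}_{m_0 n_0}-b^{r'_1}_{m_0 n_0}\equiv b^{r'_0}_{m_0 n_1}-b^{r_1}_{m_0 n_1}\pmod{N},
\end{equation*}
which says an element of $D_{m_0 n_0}$ equals an element of $D_{m_0 n_1}$, i.e.\ row $m_0$ of $D$ has a repeated element. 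The symmetric case $n_0=n_1$ with $m_0\ne m_1$, using the constraint $r_0\ne r'_0$, gives a column repetition of $D$. Finally, the fully degenerate case $m_0=m_1$ and $n_0=n_1$ produces two (possibly trivial) difference pairs in the same cell that sum to zero modulo $N$; with the index restrictions this is the statement that $D_{m_0 n_0}$ itself contains a repeated element, which is subsumed by the row-of-$D$ condition.

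For the forward direction I would traverse each case as above, showing that a 4-cycle in the Tanner graph forces the corresponding repetition. For the converse, given a repetition I would reverse-engineer indices $m_0,m_1,n_0,n_1$ and $r_0,r'_0,r_1,r'_1$ satisfying both the congruence and the required distinctness conditions on the $r$'s, thereby producing a valid 4-cycle via~(2). The main obstacle I anticipate is bookkeeping in the degenerate configurations: I must verify that the distinctness constraints $r_i\ne r'_i$ and $r'_i\ne r_{i+1}$ in~(2) are consistent with the ``repeated element'' witness I extract from $D$ or $\Delta$, in particular that a repetition counted with multiplicity within a single $D_{ij}$ or along a row/column of $D$ actually comes from two genuinely distinct edge pairs rather than the same ordered pair of edges read twice. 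Once that matching between forbidden index patterns and forbidden repetitions is made carefully, the equivalence follows directly from the $k=2$ instance of equation~(2).
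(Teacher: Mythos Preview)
The paper does not supply its own proof of this theorem: the statement is quoted verbatim from reference~[LCOMM] and used as a black box, so there is nothing in the present paper to compare your argument against. Your plan---specializing equation~(\ref{Equation2}) to $k=2$ and then splitting into the four cases according to whether $m_0=m_1$ and/or $n_0=n_1$---is the natural route and matches how such results are typically derived; each case lines up with one of the three repetition conditions exactly as you describe, and the degenerate case $m_0=m_1,\ n_0=n_1$ is indeed absorbed by a within-cell repetition in $D$.

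One small point worth tightening when you write it out: in the degenerate case you must check that the constraints $r_0\ne r'_0$, $r'_0\ne r_1$, $r_1\ne r'_1$, $r'_1\ne r_0$ force the two ordered pairs $(r_0,r'_0)$ and $(r'_1,r_1)$ to be distinct, so that the equality $b^{r_0}-b^{r'_0}\equiv b^{r'_1}-b^{r_1}$ really is a repetition in $D_{m_0 n_0}$ and not the same difference read twice. The constraint $r_0\ne r'_1$ handles this immediately, but it is exactly the ``bookkeeping'' you flagged and should be stated explicitly. Conversely, in the row-of-$D$ case with $n_0\ne n_1$, you also need $r_0\ne r'_1$ and $r'_0\ne r_1$ for the two sides to lie in $D$; these are precisely the active constraints from~(\ref{Equation2}) when $m_0=m_1$, so the match is clean.
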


Now, we use the matrix $C$ or $C^T$ proposed in Section \ref{III} to construct a base matrix for a QC-LDPC code. 
\begin{definition}\label{P}
A matrix $P = [C_1|C_2|\cdots|C_r]$ of the maximum size free of any all-one submatrix of size 2 is defined as follows. First, we call the matrix $C$ or $C^T$ as $C_1$.  By downwards row shifting in $C_1$ other matrix named as $C_2$ is obtained with the same properties as $C_1$ in terms of the cycle distribution. We try on different  row displacements to find a $C_2$ such that concatenating $C_1$ and $C_2$ does not cause a $2\times2$ all-one submatrix in $[C_1|C_2]$. We continue this process to find other  $C_i$s and $P$ of the maximum size.
\end{definition}

\begin{remark}
It should be noted that the downwards row shifting mentioned in Definition \ref{P} can be replaced with different permutaions on row's positions. For any row permutation that we choose to obtain $C_i$s, we should consider the property that concatenating each two $C_i$ and $C_j$ does not cause a $2\times2$ all-one submatrix in $[C_i|C_j]$.
\end{remark}

Next, we explain how to construct the base matrix of  trade-based QC-LDPC codes using the matrix $P$.  Suppose the base matrix corresponding to $C_1$ is denoted by $W_1$ which is obtained by converting all 1s of $C_1$ to integers $l\geq1$  . In order to define the base matrix $W$, we can consider the same process that we have to obtain $P$.  First, we identify the entries of $W_1$, then we take $W=[W_1|\cdots|W_r]$ such that each $W_i$ is a downwards row displacement of $W_1$.
\begin{example}\label{EXP5}
As explained in Example \ref{trade2}, from a super-simple 4-DGDD of type $2^4$ we construct a $5\times8$  matrix $C$ presented in Equation (\ref{C28}) which yields an irregular LDPC code with column weights in $\{1,2\}$ and row weight 2. In the following we show how we use the matrix $C$ to construct  a base matrix of a $(3,24)$-regular multiple-edge QC-LDPC code. Taking $C$ as $C_1$, we construct a matrix $P$ of the maximum size free of any all-one submatrix of size 2. The largest size of a matrix that we obtain from $C$  is $P=[C_1|C_2|C_3|C_4|C_5]$. Since the column weight is supposed to be 3, all 1s in the last six columns of $C_1$ are replaced with 3. In the first two columns of $C_1$ we have two 1s which are replaced by 1 and 2 which sum to 3. Thus, by replacing all 1s of $C_1$ with integers $1,2,3$ we define $W_1$. We observe that the columns of $W_1$ have weight 3. Then, by different downwards row displacements of $W_1$ we construct other $W_i$s such that concatenating them gives a base matrix in which the weights of each row sum to 24. We provide an instance of such base matrix:
	
{\small{$$W=\left[\begin{array}{c|c|c|c|c}
	0\, 1\,  0\, 0\, 3\, 0\, 0\, 0 &  0\, 2\, 3\, 0\, 0\, 0\, 0\, 0\, & 2\, 0\,  0\, 0\, 0\, 3\, 0\,  0\, & 0\, 0\,  0\, 3\, 0\, 0\,  0\, 3\, &  1\, 0\,  0\, 0\, 0\, 0\, 3\, 0\\
	
		1\, 0\,  0\, 0\, 0\, 0\, 3\, 0 & 0\,1\,  0\, 0\, 3\, 0\, 0\, 0 & 0\, 2\,  3\, 0\, 0\, 0\, 0\, 0 & 2\, 0\,  0\, 0\, 0\, 3\, 0\, 0 & 0\, 0\,  0\, 3\, 0\, 0\, 0\, 3\\
	
	0\, 0\,  0\, 3\, 0\, 0\, 0\, 3 & 1\, 0\,  0\, 0\, 0\, 0\, 3\, 0 & 0\, 1\,  0\, 0\, 3\, 0\, 0\, 0 & 0\, 2\,  3\, 0\, 0\, 0\, 0\, 0 & 2\, 0\,  0\, 0\, 0\, 3\, 0\, 0 \\
	
	2\, 0\,  0\, 0\, 0\, 3\, 0\, 0 & 0\, 0\,  0\, 3\, 0\, 0\, 0\, 3 & 1\, 0\,  0\, 0\, 0\, 0\, 3\, 0 & 0\, 1\,  0\, 0\, 3\, 0\, 0\, 0 & 0\, 2\,  3\, 0\, 0\, 0\, 0\, 0\\
	
	0\, 2\,  3\, 0\, 0\, 0\, 0\, 0 & 2\, 0\,  0\, 0\, 0\, 3\, 0\, 0 & 0\, 0\,  0\, 3\, 0\, 0\, 0\, 3 & 1\, 0\,  0\, 0\, 0\, 0\, 3\, 0 & 0\, 1\,  0\, 0\, 3\, 0\, 0\, 0\\
	
	\end{array}\right].$$}} 
\end{example}

\begin{example}\label{EXP1}
If we take $C^T$ in Example \ref{Ex2} as $C_1$, then the matrix $P$ of the maximum size free of any all-one submatrix of size 2 is $P = [C_1]$.  Replacing all 1s by 2 we obtain a base matrix for a $(4,6)$-regular multiple-edge QC-LDPC code. 
\end{example}

\begin{remark}\label{ExpME}
As explained above, the base matrix of our approach has the form $W=[W_1|\cdots|W_r]$. Suppose the exponent matrix associated to $W_1$ is denoted by $B_1$. In order to define the exponent matrix $B$, we can consider the same process that we have to obtain $W$.  First, we identify the entries of $B_1$, then we take $B=[B_1|\cdots|B_r]$ such that each $B_i$ is the  row shift of $B_1$ as $W_i$ is the row shift of $W_1$. Using this method, the size of the search space to define the exponent matrix is considerably reduced. In fact, instead of identifying all entries of $B$ we only determine the entries of $B_1$.
\end{remark}

\begin{example}\label{ExamP5}
As shown in Example \ref{EXP5}, replacing all 1s in $P=[C_1|\cdots|C_5]$  with 1, 2, 3 we obtain the base matrix of a $(3,24)$-regular multiple-edge QC-LDPC code. 	According to Remark \ref{ExpME}, to define the exponent matrix $B$, we can execute the same process that we have done to obtain $W$.   First, we identify the entries of $B_1$, assuming that $N=41$ is the lifting degree:
{\small $$B_1={\left[\begin{array}{cccccccc}
	\emptyset & \{0\} &  \emptyset & \emptyset & \{0,1,3\} & \emptyset & \emptyset & \emptyset\\
		\{0\} & \emptyset &  \emptyset & \emptyset & \emptyset & \emptyset & \{0,4,9\} & \emptyset \\
	\emptyset & \emptyset &  \emptyset & \{0,6,13\} & \emptyset & \emptyset & \emptyset & \{0,8,22\}  \\
	\{7,27\} & \emptyset &  \emptyset & \emptyset & \emptyset & \{0,10,25\} & \emptyset & \emptyset \\
	\emptyset & \{19,36\} &  \{6,24,36\} & \emptyset & \emptyset & \emptyset & \emptyset & \emptyset\\
	\end{array}\right].}$$ }
 	
The process of finding the entries of $B_1$ will be explained in Example \ref{EX13}. In fact, the entries are defined such that they do not result in 4-cycles in the Tanner graph.  Then, we take $B=[B_1|\cdots|B_5]$ such that each $B_i$ is a  row-shift downwards displacement of $B_1$ and is associated to $W_i$. Since the number of rows of $B_1$ is 5 and $r=5$ (the number of $B_i$s), all rows of $B_1$ appear in each row of $B$. Applying the method explained in Remark \ref{ExpME},  by identifying only 24 entries  we can construct an exponent matrix of a $(3,24)$-regular QC-LDPC code. Thus,  the size of the search space to obtain the entries of $B$ is reduced from $N^{120}$ to $N^{24}$. 
	\end{example}

In the following, we provide lower bounds on the lifting degree of trade-based multiple-edge QC-LDPC codes with girth 6. We show that these lower bounds are less than the ones proposed in the literature.

\begin{theorem}\label{ME-g6}
\cite{LCOMM}	Let $W$ be an $m\times n$ base matrix of a multiple-edge QC-LDPC code. We define three parameters as follows:
	
	  ${\mathcal X} =\max\{\sum_{j=0}^{n-1}{W_{ij}\choose2}\colon\ 0\leq i\leq m-1\}$, 
	
	 ${\mathcal Y}=\max\{\sum_{i=0}^{m-1}{W_{ij}\choose2}\colon\ 0\leq j\leq n-1\}$ and
	 
	 ${\mathcal Z}=\max\{\sum_{j=0}^{n-1}W_{ij}\times W_{i'j}\colon\ i\neq i';\ 0\leq i,i'\leq m-1\}$. 

\noindent The lower bound on the lifting degree of a multiple-edge QC-LDPC code with girth at least 6 is  $N=\max\{2{\mathcal X},2{\mathcal Y},{\mathcal Z}\}$.
\end{theorem}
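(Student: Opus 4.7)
The plan is to invoke Theorem \ref{Theg6} and reduce the problem to a pigeonhole count on the rows and columns of the difference matrices $D$ and $\Delta$ attached to the exponent matrix $B$. Throughout, I would use the fact that the girth-$6$ condition of Theorem \ref{Theg6} is equivalent to saying that no row of $D$, no column of $D$, and no row of $\Delta$ contains a repeated residue modulo $N$.

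First, fix a row index $i$. By Definition \ref{DefD}, the $i$-th row of $D$ is the concatenation, over $j=0,\ldots,n-1$, of the blocks $D_{ij}$, and $|D_{ij}|=2{W_{ij}\choose 2}$ when $W_{ij}\geq 2$ (and is empty otherwise). Hence the $i$-th row of $D$ contains exactly $\sum_{j=0}^{n-1}2{W_{ij}\choose 2}$ residues, each one nonzero modulo $N$ since the elements of any single set $B_{ij}$ are pairwise distinct. The available pool of nonzero residues has cardinality $N-1$, so the no-repetition condition forces
$$\sum_{j=0}^{n-1}2{W_{ij}\choose 2}\leq N-1.$$
Maximizing over $i$ gives $2\mathcal{X}\leq N-1$, i.e., $N\geq 2\mathcal{X}$. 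An identical counting argument applied to the columns of $D$ produces $N\geq 2\mathcal{Y}$.

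Next, I would turn to the rows of $\Delta$. By Definition \ref{DefDelta}, the block $\Delta_{(i,i')j}$ has $W_{ij}\cdot W_{i'j}$ entries, each being a difference of two elements drawn from the distinct sets $B_{ij}$ and $B_{i'j}$. Such subtractions can legitimately be zero modulo $N$, so the entries of the $(i,i')$-th row of $\Delta$ are arbitrary elements of $\mathbb{Z}_N$. Asking for no repetitions yields $\sum_{j=0}^{n-1}W_{ij}W_{i'j}\leq N$, and taking the maximum over ordered pairs $i\neq i'$ gives $N\geq \mathcal{Z}$. Combining the three inequalities proves $N\geq \max\{2\mathcal{X},2\mathcal{Y},\mathcal{Z}\}$, which is the stated bound.

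The main subtlety here is bookkeeping rather than any deep obstacle: one must carefully distinguish between subtractions performed \emph{within} a single entry $B_{ij}$, which are forced to be nonzero and therefore live in a pool of size $N-1$, and subtractions performed \emph{across} two different entries $B_{ij}$ and $B_{i'j}$, which may be zero and so live in a pool of size $N$. The bound in the strict form $N\geq 2\mathcal{X}+1$ (respectively $N\geq 2\mathcal{Y}+1$) relaxes to $N\geq 2\mathcal{X}$ (respectively $N\geq 2\mathcal{Y}$) under the convention used in \cite{LCOMM}, giving exactly the formulation in the statement.
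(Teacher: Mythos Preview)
The paper does not actually prove Theorem \ref{ME-g6}; it is quoted verbatim from \cite{LCOMM} and used as a black box (the next line after its statement is an illustrative example, not a proof). So there is no ``paper's own proof'' to compare your attempt against.

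That said, your argument is the natural one and is almost certainly what \cite{LCOMM} does: invoke the characterization of Theorem \ref{Theg6} and count. Each row of $D$ carries $\sum_j 2\binom{W_{ij}}{2}$ nonzero residues that must be pairwise distinct, forcing $N-1\geq 2\mathcal X$; the column argument is symmetric; and each row of $\Delta$ carries $\sum_j W_{ij}W_{i'j}$ residues (possibly zero) that must be pairwise distinct, forcing $N\geq\mathcal Z$. Your handling of the off-by-one issue (the $D$ entries avoid $0$, the $\Delta$ entries need not) is exactly right, and your observation that the stated bound $N\geq 2\mathcal X$ is a slight relaxation of the sharper $N\geq 2\mathcal X+1$ is accurate and worth keeping.
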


\begin{example}
Let us consider a base matrix 
$W=\left[\begin{array}{cccc}
	2 & 1 & 2 & 0\\
	2 & 2 & 3 & 1\\
	1 & 1 & 2 & 3
\end{array}\right]$. 
We have ${\mathcal X}=\max\{2,5,4\}=5,\ {\mathcal Y}=\max\{2,1,5,3\}=5$ and 
${\mathcal Z}=\max\{12,7,13\}=13$. Thus, $N\geq\max\{10,10,13\}=13$. 
\end{example}
\begin{theorem}\label{TD-ME-g6}
If $W$ is an $m\times n$ base matrix of a trade-based multiple-edge QC-LDPC code with girth at least 6, then $N\geq\max\{2{\mathcal X},2{\mathcal Y}\}$.	
\end{theorem}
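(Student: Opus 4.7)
The plan is to derive Theorem~\ref{TD-ME-g6} directly from Theorem~\ref{ME-g6} by showing that, in the trade-based setting, the quantity $\mathcal{Z}$ is automatically dominated by $\max\{2\mathcal{X},2\mathcal{Y}\}$, so the general girth-$6$ necessary condition $N\geq\max\{2\mathcal{X},2\mathcal{Y},\mathcal{Z}\}$ collapses to $N\geq\max\{2\mathcal{X},2\mathcal{Y}\}$.

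The key structural observation is that, by Definition~\ref{P}, the matrix $P$ used to build the base matrix $W$ contains no $2\times 2$ all-one submatrix. Because the support of $W$ is exactly the support of $P$, this translates into the following support property: for any two distinct row indices $i\neq i'$, there is at most one column $j_0$ for which both $W_{ij_0}$ and $W_{i'j_0}$ are nonzero. Consequently, the sum $\sum_j W_{ij}W_{i'j}$ defining $\mathcal{Z}$ contains at most one nonzero summand. Writing $a=W_{ij_0}$ and $b=W_{i'j_0}$ for the pair attaining the maximum, this gives $\mathcal{Z}\leq ab$. On the other hand, focusing on column $j_0$ alone already yields $\mathcal{Y}\geq{a\choose 2}+{b\choose 2}$, hence $2\mathcal{Y}\geq a(a-1)+b(b-1)$.

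It therefore suffices to verify the elementary inequality $ab\leq a(a-1)+b(b-1)$, equivalently $a^2+b^2-ab\geq a+b$. Using $a^2+b^2\geq 2ab$, this reduces to $(a-1)(b-1)\geq 1$, which holds whenever $\min(a,b)\geq 2$; a direct check of $b\leq b(b-1)$ dispatches the asymmetric case $a=1$, $b\geq 2$. The delicate point, which I expect to be the main obstacle, is the degenerate case $a=b=1$: here $ab=1$ while $2({a\choose 2}+{b\choose 2})=0$, so the purely algebraic comparison fails. I would dispose of this case by invoking the multiple-edge hypothesis on $W$, which forces some entry of $W$ to be at least $2$; that entry contributes to either $\mathcal{X}$ or $\mathcal{Y}$, giving $\max\{2\mathcal{X},2\mathcal{Y}\}\geq 2>1=\mathcal{Z}$. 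Combining the three cases yields $\mathcal{Z}\leq\max\{2\mathcal{X},2\mathcal{Y}\}$ unconditionally, and substituting back into Theorem~\ref{ME-g6} gives the stated bound $N\geq\max\{2\mathcal{X},2\mathcal{Y}\}$.
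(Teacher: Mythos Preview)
Your proposal is correct and follows essentially the same approach as the paper: exploit the absence of $2\times 2$ all-one submatrices in $P$ to reduce $\mathcal{Z}$ to a single product $ab$, then compare $ab$ against $2\binom{a}{2}+2\binom{b}{2}\leq 2\mathcal{Y}$. You are in fact slightly more careful than the paper, which asserts the inequality $2\bigl(\binom{a}{2}+\binom{b}{2}\bigr)\geq ab$ ``for two positive integers $a,b$ which at least one of them is bigger than 1'' without explaining why the pair realising $\mathcal{Z}$ must satisfy that hypothesis; your fallback to the multiple-edge assumption to handle $a=b=1$ closes that small gap.
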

\begin{proof}
	We prove that the  lower bound on the lifting degree of a trade-based multiple-edge QC-LDPC code with girth at least 6 is bigger than ${\mathcal Z}$. Since the base matrix of our desired code is obtained from a matrix $P$ which is free of an all-one submatrix of size 2, for each two rows with indices $i,i'\in\{0,\dots,m-1\}$ there is only one column $j\in\{0,\dots,n-1\}$ in which $W_{ij}\neq0$ and  $W_{i'j}\neq0$. Hence, the parameter ${\mathcal Z}$ is equal to the multiplication of two positive integers belonging to a column of $W$. Without loss of generality, we suppose that  these integers are in column $\ell$ and rows $r$ and $s$  of $W$. Therefore, we have ${\mathcal Z}=W_{r\ell}\times W_{s\ell}$. Since the $\ell$-th column may contain other positive integers we have $$2{\mathcal Y}\geq 2\sum_{i=0}^{m-1}{W_{i\ell}\choose2}\geq2\left({W_{r\ell}\choose2}+{W_{s\ell}\choose2}\right).$$ 
	
	It can be easily seen that for two positive integers $a,b$ which at least one of them is bigger than 1 we have $2({a\choose2}+{b\choose2})\geq a\times b$. Consequently, $$2\left({W_{r\ell}\choose2}+{W_{s\ell}\choose2}\right)\geq W_{r\ell}\times W_{s\ell}={\mathcal Z}.$$
	This implies $2{\mathcal Y}\geq{\mathcal Z}.$ 
\hfill
	\end{proof}
	
In the following examples, we elaborate the advantages of our approach in reducing the size of the search space when identifying the entries of an exponent matrix. In fact, we show that because of the non-existence of $2\times2$ submatrices of nonzero entries in the base matrix and because of its sparsity, by defining the entries of a small size submatrix of the exponent matrix we can identify all elements of the exponent matrix.
\begin{example}\label{EX13}
 Consider a base matrix which yields a $(3,24)$-regular multiple-edge QC-LDPC code with girth 6. If  the base matrix is a $2\times16$ matrix with elements 1, 2, then applying Theorem \ref{ME-g6} to this base matrix proves  ${\mathcal X}=8,\ {\mathcal Y}=1$ and ${\mathcal Z}=32$ which results in $N\geq\max\{16,2,32\}=32$. To obtain such matrix, we need to define all entries of the exponent matrix. 	Hence, the size of the search space is $N^{48}$.

As shown in Example \ref{EXP5}, replacing all 1s in $P=[C_1|\cdots|C_5]$  with 1, 2, 3 we obtain the base matrix resulting in a $(3,24)$-regular multiple-edge QC-LDPC code. If this code has girth 6, then according to Theorem \ref{TD-ME-g6}, the exponent matrix of this base matrix with  ${\mathcal X}=20,\ {\mathcal Y}=3$  has $N\geq\max\{40,6\}=40$.  Now to have girth 6, we should apply the necessary and sufficient conditions of Theorem \ref{Theg6} to the exponent matrix $B$ of Example \ref{ExamP5} which is constructed according to Remark \ref{ExpME}. Thus, we need the difference matrices corresponding to $B$. We take the entries of $B_1$ as $x_i=0$ for $i=1,2,5,6,9,12,17$ and $x_3=1,x_4=3,x_7=4,x_8=9,x_{10}=6,x_{11}=13,x_{13}=8,x_{14}=22,x_{15}=7,x_{16}=27,
x_{18}=10,x_{19}=25,x_{20}=19,x_{21}=x_{24}=36,x_{22}=6,x_{23}=24$.
 If we define $d_{i,j}$ as $x_i-x_j\pmod N$ and $-d_{i,j}$ as $N-d_{i,j}$ then we conclude that  since the set $\{\pm d_{2,3}, \pm d_{2,4}, \pm d_{3,4},\ \pm d_{6,7}, \pm d_{6,8}, \pm d_{7,8},\ \pm d_{9,10}, \pm d_{9,11}, \pm d_{10,11},\ \pm d_{12,13}, \pm d_{12,14}, \pm d_{13,14},\ \pm d_{15,16},\\ 
\pm d_{17,18}, \pm d_{17,19}, \pm d_{18,19}, \pm d_{20,21},
 \pm d_{22,23}, \pm d_{22,24}, \pm d_{23,24}\}$ is free of repeated entries and each of the sets $\{\pm d_{5,15},\pm d_{5,16}\}$ and $\{\pm d_{1,20},\pm d_{1,21}\}$ have 4 distinct elements,  the difference matrices of $B$ hold the conditions of  Theorem \ref{Theg6}. Thus,  the matrix $B$ yields a girth-6 code.  Using this method,  by identifying only 24 entries shown in Example \ref{ExamP5}, we construct an exponent matrix of a $(3,24)$-regular QC-LDPC code. Thus, the size of the search space is reduced from $N^{48}$ to $N^{24}$ which is a considerable reduction. 
	\end{example}
	
\begin{example}
Consider a base matrix which yields a $(4,6)$-regular multiple-edge QC-LDPC code with girth 6. If  {\small{$W=\left[\begin{array}{ccc}
2 & 2 & 2\\
2 & 2 & 2
\end{array} \right]$}}, then applying Theorem \ref{ME-g6} to this base matrix gives ${\mathcal X}=3,\ {\mathcal Y}=2$ and ${\mathcal Z}=12$ which results in $N\geq\max\{6,4,12\}=12$. An exponent matrix of this code with $N=12$ which satisfies the conditions of Theorem \ref{ME-g6} and gives a Tanner graph with girth 6 is  
{\small{$B=\left[\begin{array}{ccc}
(0,1) & (0,3) & (0,7)\\
(0,7) & (1,5) & (3,4)
\end{array} \right]$}}.
{\em To obtain such matrix, we need to define all entries of the exponent matrix through an exhaustive search algorithm}. Thus, the size of the search space is $N^{12}$. Whereas, using the base matrix of Example \ref{EXP1}, which is obtained by replacing 1s of the matrix (\ref{DGDD28}) by 2, we have ${\mathcal X}=3,\ {\mathcal Y}=2$ and then, according to Theorem \ref{TD-ME-g6}, $N\geq\max\{6,4\}=6$. Moreover, as a result of sparsity of the base matrix, we can determine the entries of $B$ only by identifying the elements of the first row:
{\small{$$B=\left[
\begin{array}{cccccccccccc}
	\{0,1\} & \emptyset  & \emptyset & \emptyset &  \{0,2\} & \emptyset & \emptyset & \emptyset & \emptyset &  \emptyset & \emptyset & \{3,6\} \\
		\emptyset & \emptyset &  \emptyset & \{0,1\} & \emptyset & \{3,6\} & \emptyset & \emptyset & \emptyset & \emptyset & \emptyset & \{0,2\}\\
\emptyset & \{0,1\}  & \emptyset & \emptyset & \{3,6\} & \emptyset & \emptyset & \emptyset & \emptyset & \emptyset & \{0,2\} & \emptyset\\
\emptyset & \emptyset  & \{3,6\} & \emptyset & \emptyset & \emptyset & \{0,2\} & \emptyset & \emptyset & \emptyset & \{0,1\} & \emptyset  \\
\emptyset & \{3,6\}  & \emptyset & \emptyset & \emptyset & \emptyset & \emptyset & \{0,2\} & \emptyset & \{0,1\} & \emptyset & \emptyset  \\
	\emptyset & \emptyset  &  \{0,1\} & \emptyset & \emptyset & \{0,2\} & \emptyset & \emptyset & \emptyset & \{3,6\} & \emptyset & \emptyset \\
\emptyset & \emptyset  & \emptyset & \{3,6\} & \emptyset & \emptyset & \emptyset&  \{0,1\} & \{0,2\} & \emptyset. & \emptyset & \emptyset  \\
	\{0,2\} & \emptyset & \emptyset & \emptyset & \emptyset & \emptyset & \{0,1\} & \emptyset & \{3,6\} & \emptyset & \emptyset & \emptyset \\
	\end{array}\right].
	$$}}
	
As can be seen, the elements of all rows except for the first one are a permutation 
of the first row. Thus, using our approach we reduce the computational complexity 
in the search algorithm from $N^{12}$ to $N^6$. Now, we show that the exponent 
matrix $B$ yields a Tanner graph with girth 6. We take $x_1=x_3=0,x_2=1,x_4=2,x_5=3$ 
and $x_6=6$ and assume $d_{i,j}$ is $x_i-x_j\pmod N$ and $-d_{i,j}$ is $N-d_{i,j}$. 
Since the sets 
\begin{align}
 & \{\pm d_{1,2},\pm d_{3,4},\pm d_{5,6}\}, \{ d_{1,5}, d_{1,6}, d_{2,5},d_{2,6}\}, 
    \{ d_{1,3}, d_{1,4}, d_{2,3}, d_{2,4}\}, \{ d_{5,1}, d_{6,1}, d_{5,2},d_{6,2}\},  \nonumber \\
 &   \{ d_{3,1}, d_{4,1}, d_{3,2}, d_{4,2}\}, 
   \{ d_{3,5}, d_{3,6}, d_{4,5}, d_{4,6}\}, \{ d_{5,3}, d_{6,3}, d_{5,4},d_{6,4}\}  \nonumber 
\end{align}
are free of repeated elements, the matrix $B$ satisfies the necessary and sufficient 
conditions of Theorem \ref{Theg6}. Thus, $B$ gives a Tanner graph with girth 6.

The performance curves of these two codes, one with $N=12$ and the other with $N=7$, 
decoded using the sum-product algorithm with 50 iterations are shown in 
Fig. \ref{Figure}. {\em As can be seen the trade-based code has better performance 
which is a result of longer code length and sparsity of the protograph and thus less 
cycle distribution.}
\end{example}	
\vspace{-0.5cm}

\begin{center}
\begin{figure}
\centering
\includegraphics[scale=.45]{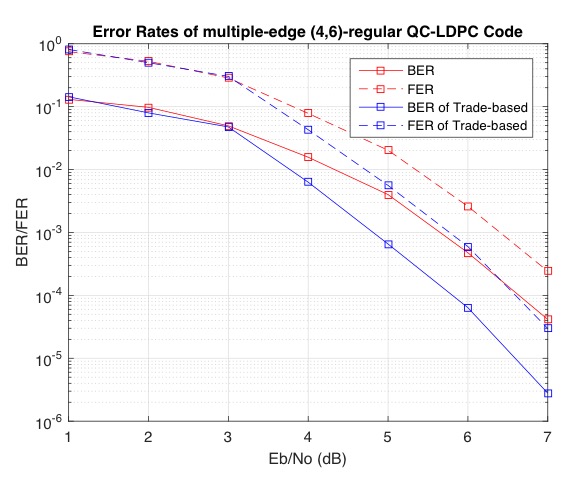}	
\caption{The comparison of the  performance curves of two $(4,6)$-regular 
multiple-edge QC-LDPC codes. One of these codes with a better performance is 
trade-based, the other has a fully-connected protograph.}\label{Figure}
\end{figure}
\end{center}
\vspace{-0.5cm}

In the following, we present a lower bound on the lifting degree of our proposed multiple-edge 
QC-LDPC code with girth 8. It is known that $W_{ij}\geq3$ causes 6-cycles 
\cite{Sadeghi}. Therefore, to have a base matrix for a girth-8 code, all 1s of 
the matrix $P$ have to be replaced by an $\ell=1,2$. In \cite{Sadeghi}, a lower 
bound on the lifting degree of a multiple-edge QC-LDPC code with girth 8 has been 
given. In order to give a better lower bound for our proposed base matrix we need 
the following definition from \cite{Sadeghi}.
\begin{definition}\label{Definition3}
\cite{Sadeghi}	Given an $m\times n$ exponent matrix $B$, a difference matrix $\Delta$ of size ${m\choose2}\times n$ corresponding to $B$ is a matrix in which a row with a row index $(i,i')$ is associated to each two rows of $B$ with indices $i$ and $i'$, $i<i'$. If $B_{ij}$ and $B_{i'j}$ are two entries   of the $j$-th column of $B$ with cardinality at least 1, then  a vector $\vec{\Delta}_{(i,i')j}$ of size $|B_{ij}|\times|B_{i'j}|$ is obtained from subtracting each entry of $B_{i'j}$ from  each entry of $B_{ij}$. Moreover, if  $B_{ij}$ or $B_{i'j}$ is $\emptyset$, then $\vec{\Delta}_{(i,i')j}=\emptyset$ . 
\end{definition}

For each $2\times2$ submatrix of $B$ we allocate a set $\Delta_{(i,i');(j,j')}$  which is obtained from subtracting each entry of  $\vec{\Delta}_{(i,i')j}$ from each element of  $\vec{\Delta}_{(i,i')j'}$ modulo $N$. Suppose $\left[\begin{array}{cc}
\left[x_1,y_1\right] & \left[x_2,y_2\right] \\
\left[x_3,y_3\right] & \left[x_4,y_4\right]
\end{array}\right]$ is a submatrix of $B$ in rows with indices $i$ and $i'$ and columns with indices $j$ and $j'$. Then,   $$\vec{\Delta}_{(i,i')j}=(x_1-x_3,x_1-y_3,y_1-x_3,y_1-y_3)$$ and $$\vec{\Delta}_{(i,i')j'}=(x_2-x_4,x_2-y_4,y_2-x_4,y_2-y_4).$$
The set $ \Delta_{(i,i');(j,j')}$ contains 16 elements.

{\small  $\begin{array}{lll}
\Delta_{(i,i');(j,j')}=\{&(x_1-x_3)-(x_2-x_4),& (x_1-x_3)-(x_2-y_4),\\ &(x_1-x_3)-(y_2-x_4),& (x_1-x_3)-(y_2-y_4),\\
&(x_1-y_3)-(x_2-x_4),&(x_1-y_3)-(x_2-y_4),\\
&(x_1-y_3)-(y_2-x_4),& (x_1-y_3)-(y_2-y_4)\\
&(y_1-x_3)-(x_2-x_4),& (y_1-x_3)-(x_2-y_4),\\
& (y_1-x_3)-(y_2-x_4),& (y_1-x_3)-(y_2-y_4),\\
&(y_1-y_3)-(x_2-x_4),& (y_1-y_3)-(x_2-y_4),\\
& (y_1-y_3)-(y_2-x_4),& (y_1-y_3)-(y_2-y_4)\}.\\
\end{array} $ }

\vspace{0.3cm}

We allocate every element of $ \Delta_{(i,i');(j,j')}$  to a 4-entry vector containing its entries. For example, $(x_1-x_3)-(x_2-x_4)\sim (x_1,x_3,x_2,x_4).$ If two  such vectors have two common components, then the equality of their corresponding elements in   $\Delta_{(i,i');(j,j')}$ causes a 4-cycle. For example,  $(x_1-x_3)-(x_2-x_4)$ and $(x_1-y_3)-(x_2-y_4)$ are two elements of $\Delta_{(i,i');(j,j')}$. Then, the two vectors corresponding to these elements are $(x_1,x_3,x_2,x_4)$ and $(x_1,y_3,x_2,y_4)$, respectively, which have two common elements $x_1,x_2$. The equality of these elements in $\Delta_{(i,i');(j,j')}$, that is $(x_1-x_3)-(x_2-x_4)=(x_1-y_3)-(x_2-y_4)$, implies the equality $-x_3+x_4=-y_3+y_4$ which according to Equation (\ref{Equation2}) gives a 4-cycle. If two vectors have one component in common, then the equality of their corresponding elements in  $\Delta_{(i,i');(j,j')}$ yields a 6-cycle. For example, from the equation $(y_1-x_3)-(x_2-y_4)=(y_1-y_3)-(y_2-x_4)$ we obtain $(y_3-x_3)-(x_2-y_4)+(y_2-x_4)=0$ which according to Equation (\ref{Equation2}) results in a 6-cycle. Hence, in a girth-8 Tanner graph,  all elements of $\Delta_{(i,i');(j,j')}$  whose corresponding vectors contain common coordinates  are distinct. The number of vectors with common coordinates is denoted by $\delta_{(i,i');(j,j')}$.
\begin{theorem}\label{ME-g8} 
 \cite{Sadeghi}	Given an $m\times n$ base matrix $W$ of a girth-8 multiple-edge QC-LDPC code, we define parameters ${\mathcal X'}, {\mathcal Y'}, {\mathcal Z'}$ and  ${\mathcal M}$:
	
	 	 $ {\mathcal X'} =\smash{\displaystyle\max_{0\leq i\leq m-1}}\{\sum_{j=0}^{n-1} W_{ij}: |W_{ij}|>1\}$,
	\vspace{0.2cm}
	
	  ${\mathcal Y'}=\smash{\displaystyle\max_{0\leq j\leq n-1}}\{\sum_{i=0}^{m-1}W_{ij}: |W_{ij}|>1\}$,
	\vspace{0.2cm}
	
		${\mathcal M}=\smash{\displaystyle\max_{0\leq i\neq i'\leq m-1}}\{\sum_{j=0}^{n-1}(W_{ij}+
		W_{i'j}):|W_{ij}|, |W_{i'j}|>1\}$,
	\vspace{0.2cm}
	
	 ${\mathcal Z'}=\max\{|\delta_{(i,i');(j,j')}|: 0\leq j,j'\leq n-1\}$,

\noindent	where if  ${\mathcal M}$ occurs in two rows $i,i'$, then ${\mathcal Z'}$ is computed on these two rows.  The lower bound on the lifting degree of a girth-8 multiple-edge QC-LDPC code is  $N=\max\{2{\mathcal X'},2{\mathcal Y'},{\mathcal M}+{\mathcal Z'}\}$.
\end{theorem}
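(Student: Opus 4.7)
The plan is to derive each of the three constituent bounds $2{\mathcal X'}$, $2{\mathcal Y'}$, and ${\mathcal M}+{\mathcal Z'}$ separately, by enumerating the possible forms of 4-cycles and 6-cycles through Equation (\ref{Equation2}) and counting how many distinct nonzero residues modulo $N$ must be reserved for the code to remain girth-8.

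First, for the row bound $N\geq 2{\mathcal X'}$: I would fix the row $i$ that attains ${\mathcal X'}$ and specialise Equation (\ref{Equation2}) to $k\in\{2,3\}$ with all $m_t=i$, which captures exactly those 4-cycles and 6-cycles whose check-side labels are confined to row $i$. Expanding the resulting closed-walk equations as signed sums of intra-cell and inter-cell differences among the elements of the multi-edge cells of row $i$, the constraint becomes that a certain family of differences must remain nonzero and pairwise distinct modulo $N$. Because each difference $d$ and its negative $-d$ are both introduced, a counting of the labels involved gives $2\sum_{j:\,W_{ij}>1}W_{ij}=2{\mathcal X'}$ forbidden residues, so $N\geq 2{\mathcal X'}$. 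The column bound $N\geq 2{\mathcal Y'}$ follows from the transposed argument, fixing all $n_t=j$ rather than all $m_t=i$, together with the dual counting across rows.

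Next, for $N\geq {\mathcal M}+{\mathcal Z'}$: I would turn to cycles whose check-side labels span the two rows $i\neq i'$ realising ${\mathcal M}$. The 4-cycles between rows $i$ and $i'$ generate differences whose cardinality is controlled by ${\mathcal M}=\sum_j (W_{ij}+W_{i'j})$ (restricted to columns where both entries are multi-edge). Independently, the 6-cycles supported on a $2\times 2$ submatrix indexed by $\{i,i'\}\times\{j,j'\}$ produce exactly the coincidence constraints on elements of $\Delta_{(i,i');(j,j')}$ whose 4-vectors share at least one coordinate, and the discussion preceding the theorem identifies the number of such forbidden coincidences as $\delta_{(i,i');(j,j')}\leq {\mathcal Z'}$. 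Since the 4-cycle and 6-cycle residue families must simultaneously be nonzero and pairwise distinct modulo $N$, one sums the two counts to obtain $N\geq {\mathcal M}+{\mathcal Z'}$.

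The main obstacle will be the third bound, because one has to argue both that every forbidden coincidence in $\Delta_{(i,i');(j,j')}$ is accounted for by $\delta_{(i,i');(j,j')}$ and that the 4-cycle residues indexed by ${\mathcal M}$ contribute a \emph{disjoint} block of constraints rather than reusing the 6-cycle residues already counted. This demands a careful case analysis on which coordinates among the sixteen 4-vectors described after Definition \ref{Definition3} overlap, mirroring the explicit enumeration there; avoiding double-counting while ensuring nothing is missed is the combinatorial heart of the proof. Once that disjointness is established, the three bounds combine into $N=\max\{2{\mathcal X'},2{\mathcal Y'},{\mathcal M}+{\mathcal Z'}\}$ as claimed.
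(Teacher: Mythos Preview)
The paper does not prove Theorem~\ref{ME-g8}; it is quoted from \cite{Sadeghi} and stated without proof, after which the paper immediately passes to its own Theorem~\ref{TD-ME-g8}. Consequently there is no ``paper's own proof'' to compare your proposal against.

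As for the proposal itself, the overall architecture---splitting into row-confined, column-confined, and two-row cycle constraints and counting forced distinct residues modulo $N$---is the natural strategy and matches the discussion surrounding Definitions~\ref{DefD}--\ref{Definition3}. One point to watch is your reading of ${\mathcal X'}$ and ${\mathcal Y'}$: the sums $\sum_j W_{ij}$ with the side condition $W_{ij}>1$ count \emph{entries}, not pairs of differences, so the factor $2$ in $2{\mathcal X'}$ must come from the $\pm d$ symmetry rather than from a ${W_{ij}\choose 2}$-type count; your sketch conflates these slightly. A second point is the claimed disjointness between the ${\mathcal M}$ block and the ${\mathcal Z'}$ block: the text before the theorem shows that equalities among $\Delta_{(i,i');(j,j')}$ elements with two shared coordinates already correspond to 4-cycles, so the 4-cycle constraints are not obviously disjoint from the 6-cycle constraints counted by $\delta_{(i,i');(j,j')}$, and the additive form ${\mathcal M}+{\mathcal Z'}$ needs a more careful justification than ``sum the two counts.'' For a full argument you would need to consult \cite{Sadeghi}.
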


The merit of our proposed base matrix obtained from the matrix $P$ can be seen in  the following lower bound on the lifting degree. 
	\begin{theorem}\label{TD-ME-g8}
	If $W$ is an $m\times n$ base matrix of a trade-based multiple-edge QC-LDPC code with girth 8, then $N\geq\max\{2{\mathcal X'},2{\mathcal Y'},{\mathcal M}\}$.	
	\end{theorem}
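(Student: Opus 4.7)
The plan is to reduce directly to Theorem \ref{ME-g8} and show that, for a trade-based base matrix, the parameter ${\mathcal Z'}$ vanishes identically. Once ${\mathcal Z'} = 0$ is established, the lower bound $N \geq \max\{2{\mathcal X'}, 2{\mathcal Y'}, {\mathcal M} + {\mathcal Z'}\}$ of Theorem \ref{ME-g8} collapses to the claimed $\max\{2{\mathcal X'}, 2{\mathcal Y'}, {\mathcal M}\}$, so no new cycle-counting work is required beyond exploiting the sparsity pattern inherited from $P$.

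First I would recall the structural property of $W$ that comes from Definition \ref{P}. By construction the matrix $P = [C_1\mid\cdots\mid C_r]$ is free of any $2\times 2$ all-one submatrix, and $W = [W_1\mid\cdots\mid W_r]$ is obtained from $P$ by replacing each 1-entry with a positive integer while leaving every 0-entry as 0. Thus the zero-pattern of $W$ coincides with that of $P$, so $W$ contains no $2\times 2$ submatrix whose four entries are all nonzero. Equivalently, for every pair of row indices $i\neq i'$ there is at most one column $j$ in which both $W_{ij}$ and $W_{i'j}$ are nonzero.

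Next I would translate this sparsity statement into the language of Definition \ref{Definition3}. The set $\Delta_{(i,i');(j,j')}$ is defined from a $2\times 2$ submatrix of $B$ indexed by rows $i, i'$ and distinct columns $j, j'$, and is meaningful only when all four entries $B_{ij}, B_{i'j}, B_{ij'}, B_{i'j'}$ are nonempty, i.e.\ only when the corresponding four entries of $W$ are all nonzero. By the previous paragraph, no such pair of distinct columns exists for any fixed pair of rows, so $\Delta_{(i,i');(j,j')} = \emptyset$ for every quadruple $(i, i', j, j')$ with $j\neq j'$. Consequently $\delta_{(i,i');(j,j')} = 0$ for every such quadruple, and therefore ${\mathcal Z'} = \max\{|\delta_{(i,i');(j,j')}|\} = 0$.

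Substituting ${\mathcal Z'}=0$ into Theorem \ref{ME-g8} yields $N \geq \max\{2{\mathcal X'}, 2{\mathcal Y'}, {\mathcal M}\}$, which is exactly the claim. The only non-routine step is the observation that the absence of $2\times 2$ all-one submatrices in $P$ propagates to $W$ and so annihilates every cross-column interaction in the $\Delta$ construction; this is conceptually parallel to the proof of Theorem \ref{TD-ME-g6}, where the same structural feature forced $2{\mathcal Y}\geq{\mathcal Z}$, and here it entirely removes the additive correction term ${\mathcal Z'}$. I do not expect any real obstacle, since the rest of the argument is bookkeeping against the already established Theorem \ref{ME-g8}.
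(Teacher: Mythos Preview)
Your proposal is correct and follows essentially the same approach as the paper: the paper's proof is the two-sentence observation that the absence of $2\times 2$ all-nonzero submatrices in $W$ (inherited from $P$) forces $\delta_{(i,i');(j,j')}=0$ for all quadruples, hence ${\mathcal Z'}=0$, and then Theorem \ref{ME-g8} gives the result. Your write-up simply fleshes out the same idea with more detail.
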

\begin{proof}
	Since the base matrix obtained from the matrix $P$ is free of $2\times2$ submatrices with all non-zero elements, the parameter $\delta_{(i,i');(j,j')}$ is zero for each $i,i'\in\{0,1,\dots,m-1\}$ and $j,j'\in\{0,1,\dots,n-1\}$. Therefore, in this case the value of ${\mathcal Z'}$ in Theorem \ref{ME-g8}  is zero.\hfill
	\end{proof}
	
\section{Trade-Based Time-Varying SC-LDPC Codes}\label{V}
 Let a $\emph{syndrome former memory order}$ (or syndrome memory) be $m_h$  and consider  $m\times n$ binary matrices $H_{i(j)}, i=0,1,\ldots,m_h$ and $j\geq0$. The parity-check matrix of an SC-LDPC code with constraint length $v_h=(m_h+1)n$ is defined as:
\begingroup\fontsize{8.5pt}{11pt}\begin{align}\label{SC-H}
H_{[\infty]}=\left[\begin{array}{ccccc}
 	H_{0(1)} & {\bf 0} & \ddots & {\bf 0}  & \ddots \\
 	H_{1(1)} & H_{0(2)} & \ddots & {\bf 0} & \ddots \\
 	\vdots & H_{1(2)} & \ddots & {\bf 0} &  \ddots\\
 	H_{m_h(1)} & \vdots  & \ddots & H_{0(t)} &\ddots \\
 	{\bf 0} & H_{m_h(2)}& \ddots &  H_{1(t)} & \ddots\\
 	{\bf 0} & {\bf 0} & \ddots & \vdots  & \ddots\\
 	{\bf 0} & {\bf 0} & \ddots   & H_{m_h(t)} & \ddots\\
 	{\bf 0}  & {\bf 0} & \ddots & {\bf 0}  & \ddots\\
 	\vdots &\vdots & \ddots & \vdots &  \ddots\\
 	\end{array}\right], 
\end{align}\endgroup 
 	{\em where ${\bf 0}$ is an $m\times n$ zero matrix. The syndrome matrices of the $t$-th column of $H_{[\infty]}$ in (\ref{SC-H})  give a matrix $H(t)=[H_0(t)|H_1(t)|\cdots|H_{m_h}(t)]^T$ which is the  matrix at time $t$. } According to the syndrome matrices $H_i(t)$, there are three types of SC-LDPC codes. If $H_i(t)=H_i(t+\tau)$ for each $0\leq i\leq m_h$ and for some $\tau>1$, then the parity-check matrix yields a $\emph{periodic}$ SC-LDPC code. The code is called $\emph{time-invariant}$ SC-LDPC if $H_i(t)=H_i(t+1)$. In other words, if the time dependence is dropped from the notation in  Equation (\ref{SC-H}). Otherwise, $H_{[\infty]}$ in (\ref{SC-H}) gives a $\emph{time-varying}$ SC-LDPC code.

 A $\emph{terminated}$ SC-LDPC code is defined by the parity-check matrix $H_{[L]}$,
\begingroup\fontsize{8.5pt}{11pt}\begin{align}\label{SC-HL}
H_{[L]}=\left[\begin{array}{cccc}
 	H_{0(1)} & {\bf 0}  & \ddots & {\bf 0}  \\
 	H_{1(1)} & H_{0(2)} & \ddots & {\bf 0} \\
 	\vdots & H_{1(2)} & \ddots & {\bf 0}\\
 	H_{m_h(1)}& \vdots  & \ddots & H_{0(L)} \\
 	{\bf 0} & H_{m_h(2)}& \ddots & H_{1(L)} \\
 {\bf 0}	 & {\bf 0} & \ddots & \vdots\\
 	{\bf 0} & {\bf 0} & \ddots  & H_{m_h(L)} \\
 	\end{array}\right]. 
\end{align}\endgroup 

  If all the columns of $H_{[L]}$ in (\ref{SC-HL}) have the same column weight, then the code is regular in its columns; otherwise, it is irregular.

In this section, we propose a novel approach to construct  time-varying SC-LDPC codes whose parity-check matrices are based on the trades of super-simple directed designs. In the following, we propose our method.

\begin{enumerate} 
\item {\em Consider a super-simple design yielding a matrix $C$ with $(m_h+1)m$ rows and $n$ columns. We divide $C$ into $m_h+1$ submatrices of size $m\times n$ named as syndrome matrices of $H(1)$.}
\item To define the syndrome matrices of the $i$-th column of $H_{[\infty]}$ in (\ref{SC-H}), that is $H(t)$, where $2\leq t\leq m_h$, we execute the following steps. We put $H_{0(t)}=H_{(t-1)(1)}$. Then, by removing all rows of $C$ appearing in $H_{0(t)}$, we divide the remaining $m_hm$ rows of $C$ to construct $m_h$ submatrices which yield the other $m_h$ syndrome matrices of $H(t)$.
\item  If a block design is constructed using base blocks, then each base block whose blocks appear in trades are associated to a syndrome matrix of $H(1)$. Then, for each time $t\neq1$, $H_{0(t)}$ is $H_{(t-1)(1)}$, and $H_{k(t)}$  is a rearranged form of  $H_{\ell(1)}$ in which $k>0, \ell\neq t$ and $H_{k(t)}$ is obtained by rearranging the row indices of  $H_{\ell(1)}$.  
\end{enumerate}

\begin{example}\label{EX1SC}
Using the matrix $C$ obtained in Example \ref{Ex2} we define the following $H_{i(t)}$ matrices.

{\tiny{$$\begin{array}{ccc}
H_{0(1)}=\left[\begin{array}{cccccccc}
1& 1 & . & . & . & . & . & .\\
. & . & 1& 1 & . & . & . & .\\
. & . & . & . & 1& 1 & . & .\\
. & . & . & . & . & . & 1 & 1
\end{array}\right] & H_{0(2)}=\left[\begin{array}{cccccccc}
1& 1 & . & . & . & . & . & .\\
. & . & 1& 1 & . & . & . & .\\
. & . & . & . & 1& 1 & . & .\\
. & . & . & . & . & . & 1 & 1
\end{array}\right] & H_{0(3)}=\left[\begin{array}{cccccccc}
. & . & . & . & 1 & . & 1 & .\\
. & . & . & 1& . & .  & . & 1\\
. & 1 & . & . & . & 1 & . & .\\
1 & . & 1 & . & . & . & . & .
\end{array}\right]\\
\\
 H_{1(1)}=\left[\begin{array}{cccccccc}
. & . & . & . & 1 & . & 1 & .\\
. & . & . & 1& . & .  & . & 1\\
. & 1 & . & . & . & 1 & . & .\\
1 & . & 1 & . & . & . & . & .
\end{array}\right] 
  & H_{1(2)}=\left[\begin{array}{cccccccc}
. & . & . & 1 & . & . & . & 1\\
. & . & 1 & . & 1 & .  & . & .\\
. & . & . & 1 & . & 1 & . & .\\
. & 1 & . & . & . & 1 & . & .
\end{array}\right] & H_{1(3)}=\left[\begin{array}{cccccccc}
1 & 1 & . & . & . & . & . & .\\
. & 1 & . & . & . & .  & 1 & .\\
. & . & . & . & . & . & 1 & 1\\
. & . & 1 & 1 & . & . & . & .
\end{array}\right] \\
\\
H_{2(1)}=\left[\begin{array}{cccccccc}
. & 1 & . & . & . & . & 1 & .\\
. & . & . & 1 & . & 1  & . & .\\
. & . & 1 & . & 1 & . & . & .\\
1 & . & . & . & . & . & . & 1
\end{array}\right] & H_{2(2)}=\left[\begin{array}{cccccccc}
. & . & . & . & 1 & . & 1 & .\\
1 & . & 1 & . & . & .  & . & .\\
. & 1 & . & . & . & . & 1 & .\\
1 & . & . & . & . & . & . & 1
\end{array}\right] 
 & H_{2(3)}=\left[\begin{array}{cccccccc}
. & . & . & . & 1 & 1 & . & .\\
. & . & . & 1 & . & 1  & . & .\\
. & . & 1 & . & 1 & . & . & .\\
1 & . & . & . & . & . & . &1
\end{array}\right] \\
\end{array}$$}}
 In the above matrices we have $H_{0(2)}=H_{0(1)}$ and $H_{0(3)}=H_{1(1)}$. Now we obtain the following parity-check matrix $H_{[3]}$ which gives a terminated time-varying SC-LDPC code with $L=3$, constraint length $24$ and column weight 3: 
 
 \begingroup\fontsize{8.5pt}{11pt}\begin{align}\label{EXS1-H}
H_{[3]}=\left[\begin{array}{ccc}
 	H_{0(1)} & 0 & 0 \\
 	H_{1(1)} & H_{0(2)} & 0 \\
 	H_{2(1)} & H_{1(2)}& H_{0(3)} \\
 	0& H_{2(2)}& H_{1(3)} \\
 	0& 0 & H_{2(3)} \\
 	\end{array}\right]. 
\end{align}\endgroup 
\end{example}

 Since the rows $H(1)$, $H(2)$ and $H(3)$ are columns of $C^T$ in Equation (\ref{DGDD28}),  instead of proposing the whole matrix and/or all syndrome matrices, we can provide column indices of $C^T$   which appear in each time of the parity-check matrix. For example, for $t=1$ of $H_{[3]}$ in (\ref{EXS1-H}), the first row of $H(1)$ is the last column of $C^T$ with column index $57$, the second row of $H(1)$ is the $11$-th column of $C^T$ with column index $56$ and continuing this process the last row of $H(1)$ is the first column of $C^T$ with column index $02$. Thus, $H(1)$ is a permutation of  the columns of $C^T$ with indices $[57,56,47,46,34,26,17,05,13,12,03,02]$. For $t=2$, the rows of $H(2)$ are the columns of $C^T$ with indices $[57,56,47,46,26,03,12, 17,34,05,13,02]$. For $t=3$, the rows of $H(3)$ are the columns of $C^T$ with indices $[34,26,17,05,57,13,46,56,47,12,03,02]$.

In the following, we point out properties of the parity-check matrix obtained from our method which gives a time-varying SC-LDPC code with girth at least 6.  

\begin{lemma}\label{lem1}
A necessary and sufficient condition to have a trade-based time-varying code with girth at least 6 is the non-existence of 4-cycles in the submatrices of the form $[H_{i(j)}|H_{i'(j')}]$, where $i'>i, j'<j$ and submatrices
$\left[\begin{array}{cc}
H_{i_1(j)} & H_{i_2(j')}\\
H_{i_3(j)} & H_{i_4(j')}\\
\end{array}\right]$, where $i_2<i_1,\ i_4<i_3$ and $j<j'$. 
\end{lemma}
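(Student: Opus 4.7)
The plan is to translate ``girth at least $6$'' into ``no $2\times 2$ all-ones submatrix of $H_{[\infty]}$'', then classify any such hypothetical submatrix by the block columns and block rows in which its two columns and two rows lie. Necessity is immediate: the displays in (a) and (b) are literal submatrices of $H_{[\infty]}$, so the sufficiency direction is where the work lies.

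For sufficiency, I will fix two distinct columns $c_1,c_2$ and two distinct rows $r_1,r_2$ forming a putative $2\times 2$ all-ones submatrix, and track the block columns $j,j'$ containing $c_1,c_2$ and the block rows $b_1,b_2$ containing $r_1,r_2$. If $j=j'$, I will observe that the construction in Section~\ref{V} forces $H_{0(j)},H_{1(j)},\ldots,H_{m_h(j)}$ to be a row-partition of $C$, so their vertical stack is a row-permutation of $C$. Since the underlying design is super-simple, any two distinct blocks meet in at most two points and hence in at most one ordered pair; equivalently, any two distinct columns of $C$ are simultaneously $1$ in at most one row, ruling out a $2\times 2$ all-ones submatrix and therefore any 4-cycle whose two columns share a block column. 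This is precisely why the lemma only has to mention cases (a) and (b).

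If $j\neq j'$, WLOG $j<j'$, the zero pattern of the block columns forces $b_1,b_2\in[j',\,j+m_h]$. When $b_1=b_2=b$, the four $1$s lie inside $[H_{b-j,(j)}\mid H_{b-j',(j')}]$; relabelling with $i=b-j'$, $i'=b-j$ puts this submatrix into the form (a) (note $i'>i$ because $j<j'$), so the hypothesis forbids the 4-cycle. When $b_1\neq b_2$, say $b_1<b_2$, the four nonzero blocks $H_{b_\alpha-j,(j)}$ and $H_{b_\alpha-j',(j')}$ for $\alpha=1,2$ form a configuration of type (b) with $i_1=b_1-j$, $i_2=b_1-j'$, $i_3=b_2-j$, $i_4=b_2-j'$ satisfying $i_2<i_1$ and $i_4<i_3$, and again the hypothesis forbids the 4-cycle.

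The main obstacle is the bookkeeping in the same-block-column case: verifying that for every $t$ the vertical concatenation $H_{0(t)},\ldots,H_{m_h(t)}$ is a row-permutation of $C$. This follows directly from the recursive rule $H_{0(t)}=H_{(t-1)(1)}$ combined with the redistribution of the remaining $m_h m$ rows of $C$ into the other $m_h$ syndrome matrices at time $t$. Once that is in hand, the case analysis above exhausts every possible $2\times 2$ all-ones submatrix of $H_{[\infty]}$, and the equivalence in the lemma follows.
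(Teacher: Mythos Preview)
Your argument is correct and follows the same line as the paper's own proof: rule out 4-cycles whose two columns sit in the same block column by observing that each $H(t)$ is a row-permutation of $C$ (hence 4-cycle free by super-simplicity), and then classify cross-column 4-cycles by whether the two rows share a block row (case~(a)) or not (case~(b)). Your write-up is in fact more careful than the paper's, which essentially just asserts the same-column case and restates the two remaining cases; you supply the explicit index computations showing that the parameters $i,i',j,j'$ (resp.\ $i_1,\ldots,i_4$) satisfy the inequalities in the lemma, and you spell out why super-simplicity forces any two columns of $C$ to share at most one nonzero row.
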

\begin{proof}
Since each column of the matrix $H$ in (\ref{SC-H}) is obtained using the matrix $C$ which is free of 4-cycles, each column of $H$ is 4-cycle free. Moreover, since the first syndrome matrix $H_{0(j)}$ of the $j$-th column of $H$ is the same as $H_{(j-1)(0)}$, we do not need to check the existence of 4-cycles between $H_{0(j)}$ and $H_{j(0)}$. Thus, to have a time-varying code with girth at least 6, we should define each $H_{i(j)}$ such that $[H_{i(j)}|H_{i'(j')}]$ does not yield a 4-cycle, where $i'>i$ and $j'<j$. Moreover, we should propose syndrome matrices in a way that it guarantees the non-existence  4-cycles in each submatrix of the form 
$\left[\begin{array}{cc}
H_{i_1(j)} & H_{i_2(j')}\\
H_{i_3(j)} & H_{i_4(j')}\\
\end{array}\right]$, where $i_2<i_1,\ i_4<i_3$ and $j<j'$.\hfill
\end{proof}

According to Lemma \ref{lem1}, since each $H(t)$ is free of 4-cycles, checking 4-cycles in the parity-check matrix $H$ in (\ref{SC-H}) is limited to checking these cycles in each two $H(j)$ and $H(j')$, $j<j'$. This property holds for checking other $2k$-cycles in $H$. In other words, if the trade-based matrix is obtained from a directed design free of cyclical trades of volume $k$, then  each column of $H$ in (\ref{SC-H}) is $2k$-cycle free. 

 The lower bounds on $m_h$ and the constraint length of time-invariant SC-LDPC codes with girth 6, column weight $m$ and syndrome matrices of size $m\times n$ are $\lceil\frac{n-1}{2}\rceil$ and $n\left(\lceil\frac{n-1}{2}\rceil+1\right)$, respectively \cite{Battag1}. As shown in Lemma \ref{Lem2-SC}, we provide a lower bound on these parameters for time-varying SC-LDPC codes with girth 6.
 \begin{lemma}\label{Lem2-SC}
 Consider a directed super-simple design with $|{\mathcal{B}}|$ blocks in which the number of $(x_i,x_j)$ pairs appearing in trades are more than $|{\mathcal{B}}|$. Then, a necessary condition to construct a trade-based time-varying SC-LDPC code with girth 6 and syndrome matrices of size $m\times n$ are $n=|{\mathcal{B}}|$ and $(m_h+1)m\geq|{\mathcal{B}}|$. 
 \end{lemma}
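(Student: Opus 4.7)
The plan is to read both necessary conditions directly off the three-step construction recipe opening Section \ref{V}. That recipe begins by taking the trade-based matrix $C$ of Section \ref{III} and horizontally partitioning its rows into $m_h+1$ consecutive groups of $m$ rows; each group becomes a syndrome matrix $H_{i(1)}$ of size $m \times n$. Consequently $n$ is forced to equal the column count of $C$, while the total row count of $C$ is forced to equal $(m_h+1)m$. Both assertions of the lemma will follow once we translate these two equalities using the hypothesis that the number of pair-rows strictly exceeds $|\mathcal{B}|$.

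I would first establish $n=|\mathcal{B}|$. By the construction of $C$ in Section \ref{III}, its columns are indexed precisely by those blocks of the super-simple directed design that appear in at least one trade; blocks appearing in no trade yield all-zero columns of $A$ and are discarded. Using the hypothesis that the number of pair-rows strictly exceeds $|\mathcal{B}|$, together with the standard incidence count for super-simple directed designs (each pair lies in $2\lambda$ ordered blocks while each block of size $k$ carries $\binom{k}{2}$ pairs), a short double-counting argument shows that every block must participate in at least one trade. Hence no column of $A$ is discarded when passing to $C$, and $n=|\mathcal{B}|$.

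The second inequality, $(m_h+1)m\geq|\mathcal{B}|$, is then essentially immediate. The horizontal partition in Step~1 of the construction forces $(m_h+1)m$ to equal the total row count of $C$, which by definition of $C$ is the number of pairs appearing in trades. By hypothesis this number strictly exceeds $|\mathcal{B}|$, so $(m_h+1)m>|\mathcal{B}|$, which in particular yields the stated $(m_h+1)m\geq|\mathcal{B}|$. The main obstacle is the previous paragraph: carefully verifying that the hypothesis truly forces every block to appear in some trade (so that no column of $A$ is dropped when forming $C$) is the delicate point, and in full generality this may require restricting to the specific super-simple families featured in the paper rather than a purely generic combinatorial count.
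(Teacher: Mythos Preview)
Your argument for the second inequality $(m_h+1)m \geq |\mathcal{B}|$ is exactly the paper's: the row count of $C$ equals the number of pairs appearing in trades, this exceeds $|\mathcal{B}|$ by hypothesis, and the Step~1 partition forces $(m_h+1)m$ to match that row count.

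For the first equality $n = |\mathcal{B}|$, however, you and the paper diverge. You try to extract ``every block lies in a trade'' from the pair-count hypothesis via double counting, and you correctly flag this as the delicate step. In fact that deduction fails in general: one can have a design in which one block participates in no trade while the remaining blocks generate arbitrarily many trade-pairs, so the inequality ``(number of pairs in trades) $> |\mathcal{B}|$'' alone does not force every column of $A$ to survive in $C$. The paper does not attempt any such count. Instead it invokes the standing assumption, stated at the close of Section~\ref{II} and used throughout, that the super-simple directed designs under consideration have smallest defining sets containing at least half of the blocks; from this (via Proposition~\ref{Prop1}) it asserts that every block is involved in some trade, so no column is discarded and $n = |\mathcal{B}|$. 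In other words, the paper treats ``all blocks appear in trades'' as a consequence of an \emph{additional} structural hypothesis on the design family, not as something deducible from the pair-count inequality in the lemma statement. Your proposal would be repaired simply by replacing the double-counting paragraph with an appeal to that standing defining-set assumption.
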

 \begin{proof}
 Since the defining set of the directed designs we use to construct the matrix $C$ in Section \ref{III} have at least half of the blocks, all blocks are involved in the trade designs and thus the number of rows of $C$ is $|{\mathcal{B}}|$. Hence, $n=|{\mathcal{B}}|$. Moreover, since in the process of constructing the parity-check matrices we choose the directed designs so that the number of pairs ocurring in the trade designs are more than the number of blocks, the number of columns of $C$ is at least $|{\mathcal{B}}|$ and thus $m_h$ must  satisfy  the inequality $(m_h+1)m\geq|{\mathcal{B}}|$. \hfill
 \end{proof}
 
 \begin{example}\label{Ex2-col4}
 In Example \ref{Ex-col4}, from the $(20,5,1)$-DD we obtain a $64\times32$ matrix $C$ whose  transpose gives a $(2,4)$-regular LDPC code with girth 8. Now, we use the matrix $C$ to construct a trade-based time-varying SC-LDPC code with column weight 4 and syndrome matrices of size $m\times n$. First we identify the values of $m,n$ and $m_h$. It is clear that $n$,  the number of blocks of the design which appear in trades, is $|{\mathcal{B}}|=32$.  In order to define $H_{i(0)}$ for $i=0,\ldots,m_h$, we divide $C$ into $m_h+1$ submatrices. If we consider $m=8$, then according to Lemma \ref{Lem2-SC}, $(m_h+1)m\geq 32$. In other words, $m_h\geq 3$. Since, the number of rows in $C$, or the number of pairs in trades is 64, we can take $m_h=\frac{64}{m}=8$. Therefore, using $C$ we construct a trade-based time-varying SC-LDPC code with column weight 4.  The row weight of $H_s$ is 16 and the constraint length is $v_h=(m_h+1)|{\mathcal{B}}|=9\times32=288$. Similarly, if we take $m=4,16$, then $m_h=16,4$ with constraint length $v_h=544,160$, respectively.
 \end{example}
 
 Now we compare the syndrome memory and the constraint length of time-varying SC-LDPC code in Example \ref{EX1SC} with its counterpart time-invariant SC-LDPC codes with column weight 3, row weight 6 and girth 6 in \cite{Battag1}. For the latter case with syndrome matrices of size $3\times 6$ the lower bound on $m_h$ is 3 and the lower bound on the constraint length of the code is 24. Whereas, using our method the constraint length of the time-varying SC-LDPC code that we obtained with the same degree distribution and girth 6 is 24 with $m_h=2$.  Therefore, using our method we construct  the parity-check matrix of time-varying SC-LDPC codes whose constraint length meet the lower bound on the constraint length of time-invariant SC-LDPC code but with a smaller memory. Moreover, our proposed parity-check matrix does not rely on protographs and their associated exponent matrix.

\section{Conclusion}\label{VI}
In this paper, three studies are provided. First, we use the concept of trades in directed super-simple designs to present a new approach to construct LDPC codes with different column weights and row weights whose Tanner graphs have diverse girths. We provide some examples to show our method results in regular and irregular LDPC codes with different girths.  Second, we use the trade-based parity-check matrices to provide base matrices of multiple-edge QC-LDPC codes with girths 6 and 8. We show that these base matrices result in QC-LDPC codes in which lower bounds on the lifting degree are less than the ones in the literature. Moreover, the numerical results prove that  with our approach  the computational complexity of the search algorithm to define the exponent matrix is much less than other methods in the literature.  Third, we develop time-varying SC-LDPC codes whose parity-check matrices do not depend on protographs. In fact, using the trade-based matrices from the first point above we provide a method to directly construct the parity-check matrix of time-varying SC-LDPC codes.

\end{document}